	\algnewcommand{\LeftComment}[1]{\Statex \(\triangleright\) #1}
\newcommand{\yhat}{\hat{y}}
\newtheorem{theorem}{Theorem}
\newtheorem{proposition}{Proposition}
\newtheorem{corollary}{Corollary}
\newtheorem{lemma}{Lemma}
\theoremstyle{definition}
\newtheorem{remark}{Remark}
\newtheoremstyle{assume}
  {3pt}% measure of space to leave above the theorem. e.g.: 3pt
  {3pt}% measure of space to leave below the theorem. e.g.: 3pt
  {}% name of font to use in the body of the theorem
  {}% measure of space to indent
  {\bf}% name of head font
  {}% punctuation between head and body
  { }% space after theorem head; " " = normal interword space
  {\thmname{#1}.\thmnumber{#2}\thmnote{ \textnormal{(\textit{#3})}}}% Manually specify head
\theoremstyle{assume}
\def\nnil{\nil}
\newcounter{prob}
\newenvironment{prob}[1][\nil]{%
	\def\tmp{#1}
	\equation
	\ifx\tmp\nnil
		\refstepcounter{prob}
		\tag{P\Roman{prob}}
	\else
		\tag{\tmp}
	\fi
	\aligned%
}{%
	\endaligned\endequation%
}
\DeclareMathOperator*{\argmin}{argmin}
\DeclareMathOperator*{\minimize}{minimize}
\DeclareMathOperator*{\maximize}{maximize}
\DeclareMathOperator{\subjectto}{subject\ to}
\DeclareMathOperator{\indicator}{\mathbb{I}}
\DeclareMathOperator{\sspan}{span}
\definecolor{RED}{rgb}{1,0,0}\definecolor{BLUE}{rgb}{0,0,1}
\title{Sparse multiresolution representations with adaptive kernels}
\author{Maria~Peifer, Luiz~F.~O.~Chamon, Santiago~Paternain, and Alejandro~Ribeiro%
\thanks{Department of Electrical and Systems Engineering, University of Pennsylvania.
e-mail: \mbox{\texttt{mariaop@seas.upenn.edu}}~(contact author), \mbox{\texttt{\{luizf,spater,aribeiro\}@seas.upenn.edu}}. Their work is partially supported by the National Science Foundation CCF 1717120 and ARO W911NF1710438.}
\thanks{Part of the results in this paper previously appeared in~\cite{peifer2018locally, peifer2019sparse}.}%
}
\begin{document}

\maketitle

\begin{abstract}

Reproducing kernel Hilbert spaces~(RKHSs) are key elements of many non-parametric tools successfully used in signal processing, statistics, and machine learning. In this work, we aim to address three issues of the classical RKHS-based techniques. First, they require the RKHS to be known \emph{a priori}, which is unrealistic in many applications. Furthermore, the choice of RKHS affects the shape and smoothness of the solution, thus impacting its performance. Second, RKHSs are ill-equipped to deal with heterogeneous degrees of smoothness, i.e., with functions that are smooth in some parts of their domain but vary rapidly in others. Finally, the computational complexity of evaluating the solution of these methods grows with the number of data points, rendering these techniques infeasible for many applications. Though kernel learning, local kernel adaptation, and sparsity have been used to address these issues, many of these approaches are computationally intensive or forgo optimality guarantees. We tackle these problems by leveraging a novel integral representation of functions in RKHSs that allows for arbitrary centers and different kernels at each center. To address the complexity issues, we then write the function estimation problem as a sparse functional program that explicitly minimizes the support of the representation leading to low complexity solutions. Despite their non-convexity and infinite dimensionality, we show these problems can be solved exactly and efficiently by leveraging duality, and we illustrate this new approach in simulated and real data.

\end{abstract}

\begin{IEEEkeywords}
multikernel learning, RKHS
\end{IEEEkeywords}

%%%%%%%%%%%%%%%%%%%%%%%%%%%%%%%
%%% SECTION : Introduction  %%%
%%%%%%%%%%%%%%%%%%%%%%%%%%%%%%%

\section{Introduction} \label{sec_intro}

%!TEX root = mkl.tex
%%%%%%%%%%%%%%%%%%%%%%%%%%%%%%%%%%%%%%%%%%%%%%%%%%%%%%%%%%%%%%%%%%%%%
%   Introduction
%%%%%%%%%%%%%%%%%%%%%%%%%%%%%%%%%%%%%%%%%%%%%%%%%%%%%%%%%%%%%%%%%%%%%%

Reproducing kernel Hilbert spaces~(RKHSs) are at the core of non-parametric techniques in signal processing, statistics, and machine learning~\cite{scholkopf2001learning, bishop2006pattern, hofmann2008kernel, yuan2010reproducing, berlinet2011reproducing, Jeronimo13k, koppel2017parsimonious}. Their success stems from the fact that they combine the versatility of functional spaces with the tractability of parametric methods. Indeed, despite the richness of functions found in RKHSs, they can be represented as a~(possibly infinite) linear combination of simple basis functions called \emph{reproducing kernels}~(or Mercer kernels)~\cite{berlinet2011reproducing, bishop2006pattern}. For smooth functions, however, a celebrated variational result known as the \emph{representer theorem} states that the number of basis is finite and that they are given by the kernels evaluated at the data points~\cite{kimeldorf1971some, scholkopf2001generalized}. In other words, learning smooth functions in RKHSs is effectively a parametric, finite dimensional problem. Despite their success, RKHS learning methods have two major practical drawbacks: (i)~the RKHS must be known \emph{a priori} and (ii)~evaluating the learned function can be computationally prohibitive.

Indeed, the appropriate functional space of the solution is seldom known in practice. Moreover, since the RKHS dictates the shape and smoothness properties of its functions, its choice is application-specific and ultimately affects the learning performance~\cite{lanckriet2004learning, hofmann2008kernel, bergstra2012random, li2012automatic, kuo2014kernel}. Although there exist classes of kernels~(and thus RKHSs) that can approximate continuous function arbitrarily well~\cite{micchelli2005learning}, they may need a large number of data points to do so. In fact, it is well-known that RKHS methods are not sample efficient for learning functions with varying degrees of smoothness, i.e., functions that are smooth in some parts of their domain but vary rapidly in others~\cite{Donoho98m}.

Kernel learning approaches have been proposed to address this issue by fitting a combination of kernels from a predefined set~\cite{lanckriet2004learning, micchelli2005learning, gonen2011multiple} or by using spectral representations of positive-definite functions~\cite{ong2005learning, wilson2013gaussian, yang2015learning}. Even when the general form of the kernel is known, the choice of smoothness parameter remains~(e.g., selecting the bandwidth of Gaussian kernels). In this case, common approaches include grid search with cross-validation~\cite{bergstra2012random, kuhn2016applied} or application-specific heuristic such as maximizing the margin of the support-vectors machine~(SVM)~\cite{li2012automatic, kuo2014kernel}. These methods, however, quickly become impractical as they often must search over fine grids, use a large set of kernels, or require additional data. They also fail to address the issue of estimating functions with heterogeneous degrees of smoothness. Although this can be done by choosing different RKHSs for different regions of the domain by means of plug-in rules~\cite{Brockmann93l}, binary optimization~\cite{Liu14s}, hypothesis testing~\cite{ghosh2008kernel}, or gradient descent and alternating minimization~\cite{yuan2009adaptive, chen2016kernel}, these solutions come with no optimality guarantee due to the non-convexity of these locally adapted smoothness formulations.

The issue of computational complexity stems directly from the use of the representer theorem~\cite{kimeldorf1971some, scholkopf2001generalized, argyriou2009there}. Despite the reduction from the infinite dimensional problem of estimating smooth functions in RKHSs to that of estimating a finite number of parameters, solving of the problem can become computationally expensive because the number of parameters is proportional to the number of data points. Thus, evaluating the function at any point requires as many kernel evaluations as training samples, which in many applications is prohibitively high. This issue is often addressed by imposing a sparsity penalty on the coefficients to reduce the number of kernel evaluations. Greedy heuristics~\cite{smola2000sparse, vincent2002kernel} and~$\ell_1$-norm relaxations \cite{tibshirani1996regression, fung2002minimal, jud2016sparse, gao2013sparse}~are then often used to cope with the combinatorial nature of this problem, which is known to be NP-hard in general~\cite{Natarajan95s, amaldi1998approximability}. These methods, however, often implicitly rely on the classical representer theorems~\cite{kimeldorf1971some, scholkopf2001generalized}, despite the fact that they no longer hold in the presence of sparsity penalties~(see Remark~\ref{R:sparsity}). Hence, even if the sparse optimization program could be solved exactly, the solution would remain suboptimal with respect to the original function estimation problem.

In this work, we propose to tackle these issues by simultaneously (i)~adapting the kernels~(RKHSs) locally and (ii)~allowing arbitrary centers instead of constraining the kernels to be evaluated at the training samples. To do so, we put forward an integral representation of functions that lie in the sum space of an uncountable number of RKHSs taken from a parametrized family. This representation does not assume that the kernels are evaluated at any particular points of the domain. Enforcing sparsity on the coefficients of this representation allows us to determine the optimal parameter and center of each kernel locally. Moreover, sparsity has the added benefit of minimizing the number of kernels used. Despite the non-convex~(sparsity) and infinite dimensional~(functional) nature of the resulting optimization problem, we can leverage the results from~\cite{Chamon18s} to solve it exactly and efficiently using duality. These results also give rise to a new sparse representer theorem.

The paper is structured as follows: Section \ref{sec:problem} gives an overview of reproducing kernel Hilbert spaces and defines the problems associated with learning in these spaces. In section \ref{sec:solution} we formulate the problem using an integral representation of a function in an RKHS. In Section \ref{sec:dual} we show the solution to our problem in the dual domain, prove that the duality gap between the primal and the dual  problem is zero, and present our algorithm. Numerical examples are then used to illustrate the effectiveness of this method at locally identifying the correct kernel parameters as well as the kernel centers in different applications~(Section~\ref{sec:sims}).

%%%%%%%%%%%%%%%%%%%%%%%%%%%%%%%%%%%%%%%
%%% SECTION : Problem Formulation   %%%
%%%%%%%%%%%%%%%%%%%%%%%%%%%%%%%%%%%%%%%

\section{Learning in RKHSs: The Classical Problem} \label{sec:problem}

%!TEX root = mkl.tex
%%%%%%%%%%%%%%%%%%%%%%%%%%%%%%%%%%%%%%%
%%% SECTION : Problem Formulation   %%%
%%%%%%%%%%%%%%%%%%%%%%%%%%%%%%%%%%%%%%%

Given a training set of data pairs~$(\bbx_i, y_i)$, $i = 1, \dots, N$, where~$\bbx_i \in \ccalX$ are the observations or independent variables, with~$\ccalX \subset \reals^p$ compact, and~$y_i \in \reals$ is the label or dependent variable, we seek a function~$f: \ccalX \to \reals$ in the RKHS~$\ccalH_0$ that fits these data, i.e., such that~$c(f(\bbx_i),y_i)$ is small for some convex figure of merit~$c$, e.g., quadratic loss, hinge loss, or logistic log-likelihood. Formally, an RKHS is a complete, linear function space endowed with a unique reproducing kernel. A reproducing kernel~$k: \reals^p \times \reals^p \to \reals$ is a positive-semidefinite function with the property~$\langle f(\cdot), k(\cdot,\bbx) \rangle_{\ccalH_0} = f(\bbx)$ for any function~$f \in \ccalH_0$ and point~$\bbx \in \reals^p$, where~$\langle \cdot, \cdot \rangle_{\ccalH_0}$ denotes the inner product of the Hilbert space~$\ccalH_0$~\cite{berlinet2011reproducing}. Reproducing kernels are often parametrized by a constant that characterizes the smoothness/richness of the RKHS, such as the bandwidth of sincs, the order of polynomial kernels, or the scale/variance of Gaussian~(radial basis function, RBF) kernels~\cite{scholkopf2001learning, bishop2006pattern}. Just as each RKHS has a unique kernel, each kernel defines a unique RKHS. Explicitly, every function~$f \in \ccalH_0$ can be written as the pointwise limit of a linear combination of kernels, i.e.,
\begin{equation}\label{eqn_orig_repres}
	f(\bbx) = \lim_{n \to \infty} \sum_{j=1}^n a_j k(\bbx, \bbz_j \,;\, w_0)
		\text{,}
\end{equation}
where~$w_0$ denotes the kernel parameter and the~$\bbz_j \in \ccalX$ are called the kernel \emph{centers}. In other words, the RKHS~$\ccalH_0$ is the completion of the space~$\sspan\{k(\cdot, \bbz \,;\, w_0) \mid \bbz \in \ccalX\}$~\cite[Sec.~2]{aronszajn1950theory}. Note that different parameters~$w_0$ result in different RKHSs.

There are infinitely many representations of the form~\eqref{eqn_orig_repres} that can interpolate a finite set of points. To avoid overfitting the data and obtain a unique solution, a smoothness prior is often used by minimizing the RKHS norm of the solution~\cite{scholkopf2001learning, bishop2006pattern} as in 
\begin{prob}\label{eqn_funct_prob}
	\minimize_{f \in \ccalH_0}& &&\Vert f \Vert_{\ccalH_0}
	\\
 	\subjectto& &&c(f(\bbx_i), y_i) \leq 0
 		\text{,} \quad i = 1, \dots, N
 		\text{.}
\end{prob}
Despite its infinite dimensional nature, this optimization problem admits a solution written as a finite linear combination of kernels centered at the sample points, i.e., there exists a solution~$f^\star$ of~\eqref{eqn_funct_prob} of the form~\cite{scholkopf2001generalized, kimeldorf1971some}
\begin{equation}\label{eqn_representer}
	f^\star(\cdot) = \sum_{i = 1}^N a_i^\star k(\cdot,\bbx_i \,;\, w_0)
		\text{.}
\end{equation}
This celebrated variational result is known as the representer theorem and lies at the core of the success of RKHS methods by reducing the functional~\eqref{eqn_funct_prob} to the finite dimensional
\begin{prob}[$\text{PI}^\prime$]\label{eqn_finite_prob}
	\minimize_{\{a_i\} \in \reals}&
		&&\sum_{i= 1}^N \sum_{j = 1}^N a_i a_j k(\bbx_i,\bbx_j \,;\, w_0)
	\\
 	\subjectto& &&c(\yhat_i, y_i) \leq 0
 		\text{,} \quad i = 1, \dots, N
 		\text{,}
 	\\
 	&&&\yhat_i = f(\bbx_i) = \sum_{j = 1}^N a_j k(\bbx_i,\bbx_j \,;\, w_0)
 		\text{.}
\end{prob}

Their success notwithstanding, RKHS-based methods have limitations that can hinder their use in practice. Firstly, the kernel function~$k$~(i.e., the RKHS~$\ccalH_0$) must be chosen \emph{a priori} and given that it determines the shape of the solution~[see~\eqref{eqn_representer}], its choice directly affects the method performance. Even when the form of~$k$ is known, selecting the smoothness parameter~$w_0$ in~\eqref{eqn_orig_repres} can be challenging without application-specific prior knowledge. Secondly, functions RKHSs have limited capability to fit functions with heterogeneous degrees of smoothness, i.e., functions that vary slowly in some regions of their domains and rapidly in others~\cite{Donoho98m}. Although adapting~$k$ or the parameter~$w_0$ in~\eqref{eqn_representer} has been proposed to address this issue, \eqref{eqn_finite_prob} then becomes a non-convex program, foregoing efficient solutions and/or optimality guarantees. Finally, although the representer theorem allows us to use the finite dimensional~\eqref{eqn_finite_prob} to solve the functional~\eqref{eqn_funct_prob}, note that evaluating the solution~\eqref{eqn_representer} requires as many kernel evaluations as observations. Hence, the complexity of the representation grows with the sample size, which is infeasible for large data sets. Moreover, even if finding a sparse set of coefficients~$a_j$ in~\eqref{eqn_finite_prob} were tractable, which it is not~\cite{Natarajan95s, amaldi1998approximability}, the representation in~\eqref{eqn_representer} does not hold in the presence of a sparsity regularization. In fact, there often exist more parsimonious representations that fit the samples as well as any solution of~\eqref{eqn_finite_prob}~(see Remark~\ref{R:sparsity}).

To overcome these issues, the next section puts forward an integral parametrization of functions in RKHSs. This representation can be used to pose problems that can locally adapt not only the kernel centers, but also the kernel itself~(i.e., the RKHS). Moreover, it allows for regularizations beyond smoothness, most notably sparsity. Although the resulting optimization programs are non-convex and infinite dimensional, we show they have zero duality gap and can therefore be solved efficiently and exactly using conventional methods, such as~(stochastic) (sub)gradient descent. This also allows us to derive a new sparse representer theorem. % that precludes the smoothness prior.

\begin{figure}[tb]
  \includegraphics[width=0.4\textwidth]{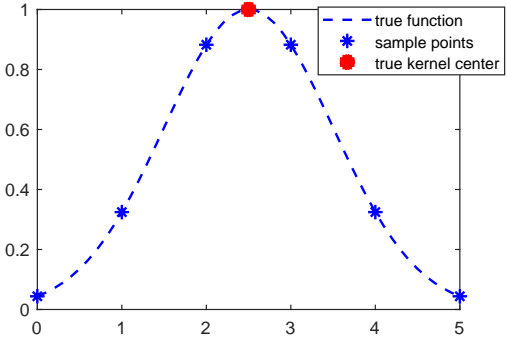}
  \caption{Illustration of Remark~\ref{R:sparsity} on the importance of kernel centers for model complexity.}
  \label{fig:resOneKernelex}
\end{figure}

\begin{remark}\label{R:sparsity}

When seeking parsimonious representations of functions in RKHSs, i.e., representations as in~\eqref{eqn_orig_repres} for which the~$a_j$ are sparse, the classical representer theorems leading to~\eqref{eqn_representer} do not apply. To see this is the case, take the counter-example illustrated in Figure~\ref{fig:resOneKernelex}. Let the sample points be taken from an underlying function composed of a single Gaussian kernel, namely
\begin{equation}
	y_i = \exp \left[ -\frac{(x_i - 2.5)^2}{2} \right]
		\text{,} \quad i = 1,\dots,N
		\text{,}
\end{equation}
where~$x_i \neq 2.5$ for all~$i$. What is more, assume that we know the correct RKHS~$\ccalH_0$, i.e., that the kernel function in~\eqref{eqn_orig_repres} and~\eqref{eqn_representer} is~$k(x,z) = \exp \left[ -(x - z)^2/2 \right]$. Then, it is clear that the most parsimonious function in~$\ccalH_0$ that fits the data is~$f^\prime(\cdot) = \exp \left[ -\frac{(\cdot - 2.5)^2}{2} \right]$. However, $f^\prime$ is not in the feasible set of~\eqref{eqn_finite_prob}. Hence, though it can find functions with the same approximation error, they will not be the simplest representation, as illustrated in Figure~\ref{fig:resOneKernelex}.

\end{remark}

%%%%%%%%%%%%%%%%%%%%%%%%%%%%%%%%%%%%%%%
%%% SECTION : Functional approach   %%%
%%%%%%%%%%%%%%%%%%%%%%%%%%%%%%%%%%%%%%%

\section{Learning an Integral Representation Instead} \label{sec:solution}

%!TEX root = mkl.tex
%%%%%%%%%%%%%%%%%%%%%%%%%%%%%%%%%%%%%%%
%%% SECTION : Functional approach   %%%
%%%%%%%%%%%%%%%%%%%%%%%%%%%%%%%%%%%%%%%

In the previous section, we have argued that classical RKHS-based methods suffer from three main drawbacks: (i)~the RKHS must be fixed \emph{a priori}, (ii)~they are suboptimal for functions with heterogeneous degrees of smoothness, and (iii)~the computational complexity of evaluating solutions is proportional to the sample size. Our first step towards addressing these issues is to extend~\eqref{eqn_orig_repres} by allowing different RKHSs at different centers. Throughout this work, we assume that the reproducing kernel family~$k$ is fixed~(e.g., Gaussian kernels) and adapt its smoothness parameter~(e.g., bandwidth). Nevertheless, all results hold for any parametrized dictionary of kernels, i.e., they need not belong to the same family. Explicitly, we write
\begin{equation}\label{eqn_general_rep}
	f(\bbx) = \lim_{n \to \infty} \sum_{j=1}^n a_j k(\bbx, \bbz_j \,;\, w_j)
		\text{.}
\end{equation}
The function~$f$ in~\eqref{eqn_general_rep} now lies in the sum space of a countable number of RKHSs~$\ccalH = \bigoplus_{j = 1}^\infty \ccalH_j$, where the~$\ccalH_j$ are defined by the parameters~$w_j$. Note that taking~$w_j = w_0$ for all~$j$ in~\eqref{eqn_general_rep} recovers the representation~\eqref{eqn_orig_repres} of a function in~$\ccalH_0$.

Similar formulations have been proposed to deal with the aforementioned issues, although optimally selecting~$w_j$~[issue~(i)--(ii)] and~$\bbz_j$~[issue~(iii)] remains an open problem~\cite{smola2000sparse, vincent2002kernel,tibshirani1996regression, fung2002minimal, jud2016sparse,gao2013sparse}. Our second step is therefore to tackle this hurdle by introducing an integral counterpart of~\eqref{eqn_general_rep}. Explicitly, we define the function
\begin{equation}\label{eqn_integral_rep}
	h(\cdot) = \int_{\ccalX \times \ccalW} \alpha(\bbz,w) k(\cdot,\bbz \,;\, w) d\bbz dw
		\text{,}
\end{equation}
where~$\ccalW$ is a compact subset of~$\reals$ and~$\alpha: \ccalX \times \ccalW \rightarrow \reals$ is in~$L_2(\ccalX \times \ccalW)$. Notice that, in contrast to~\eqref{eqn_orig_repres} and~\eqref{eqn_general_rep}, the expression in~\eqref{eqn_integral_rep} does not depend on a choice of centers or kernel parameters, thus addressing issues~(i) and~(ii). Before proceeding, we show that~\eqref{eqn_general_rep} and~\eqref{eqn_integral_rep} are essentially equivalent, i.e., that~\eqref{eqn_integral_rep} can essentially represent the functions in~$\ccalH$.

\begin{proposition}\label{T:equivalence}

Let~$k$ be a continuous reproducing kernel, i.e., $k(\cdot, \bbz ; \cdot)$ is continuous over the compact set~$\ccalX \times \ccalW$ for each~$\bbz \in \ccalX$. Then, for each~$f \in \ccalH_p$ there exists a sequence~$\{h_m\}$ of functions as in~\eqref{eqn_integral_rep} such that~$h_m \to f$ pointwise.

\end{proposition}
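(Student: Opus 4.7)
The plan is to construct an explicit integral approximant by mollifying the discrete representation in~\eqref{eqn_general_rep}: each point mass at a center $(\bbz_j, w_j)$ is replaced by a normalized indicator over a shrinking neighborhood, and continuity of $k$ lets me pass to the limit. This turns the countable sum \eqref{eqn_general_rep} into an integral of the form \eqref{eqn_integral_rep} while preserving pointwise values.

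The main building block is the single-term approximation. Fix $(\bbz_0, w_0) \in \ccalX \times \ccalW$ and $a \in \reals$. For $m$ large enough, the set $B_m := B_{1/m}(\bbz_0, w_0) \cap (\ccalX \times \ccalW)$ has positive Lebesgue measure, so $\alpha_m := (a/|B_m|) \mathbb{I}_{B_m}$ is bounded and therefore lies in $L_2(\ccalX \times \ccalW)$. Since $k$ is continuous on the compact set $\ccalX \times \ccalX \times \ccalW$, it is uniformly continuous there; hence for any $\varepsilon > 0$ and all sufficiently large $m$, $|k(\bbx, \bbz; w) - k(\bbx, \bbz_0; w_0)| < \varepsilon$ uniformly in $\bbx \in \ccalX$ and $(\bbz, w) \in B_m$. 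Averaging over $B_m$ then yields $|h_m(\bbx) - a\,k(\bbx, \bbz_0; w_0)| \leq |a|\varepsilon$ uniformly in $\bbx$, giving uniform (hence pointwise) convergence.

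By linearity of the integral, applying this construction to every term of a finite sum $f_n(\cdot) := \sum_{j=1}^n a_j k(\cdot, \bbz_j; w_j)$ with density $\sum_{j=1}^n (a_j/|B_{j,m}|)\,\mathbb{I}_{B_{j,m}}$ produces an $h_m$ converging uniformly to $f_n$ as $m \to \infty$. For a general $f \in \ccalH$, which by \eqref{eqn_general_rep} is a pointwise limit of such finite sums $f_n$, a diagonal argument closes the gap: for each $n$, choose $m_n$ so that $\|h_{n,m_n} - f_n\|_\infty < 1/n$; then for every $\bbx \in \ccalX$,
\[
|h_{n,m_n}(\bbx) - f(\bbx)| \;\leq\; \|h_{n,m_n} - f_n\|_\infty + |f_n(\bbx) - f(\bbx)| \;\longrightarrow\; 0 ,
\]
which delivers the claimed pointwise convergence.

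The main subtlety lies in the single-term step: ensuring $|B_m| > 0$ for all large $m$ requires each center to sit at a point where $\ccalX \times \ccalW$ has positive local Lebesgue measure. This is automatic when $\ccalX \times \ccalW$ is the closure of an open set (the standing assumption in the paper's setting) but must be acknowledged. A related delicate point is that the $L_2$-norm of $\alpha_m$ in general blows up as $m \to \infty$, since Diracs are not square-integrable; this is why only pointwise convergence of $h_m$ to $f$ is claimed and not convergence of the densities $\alpha_m$ themselves.
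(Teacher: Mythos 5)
Your proof is correct and follows essentially the same route as the paper's: both mollify the discrete representation \eqref{eqn_general_rep} with an approximation of the identity (the paper uses a product of one-dimensional box kernels $r_m$, you use a normalized ball indicator) and invoke continuity of $k$ to pass to the pointwise limit. Your version is somewhat more careful about normalization near the boundary of $\ccalX \times \ccalW$ and adds a diagonal step to cover pointwise limits of finite sums, but the underlying idea is identical.
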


\begin{proof}

Consider the approximation of the identity~$r_m(x) = m \indicator\left[ \vert x \vert < 1/m \right]$ and note that~$r_m(x) \to \delta(x)$ weakly in the vague topology, i.e., $\int_\ccalD r_m(x) \varphi(x) \to \varphi(0)$ for all~$\varphi$ continuous and~$\ccalD$ compact~\cite{Rudin91f}. Now let~$f \in \ccalH_p$ be written as~$f(\cdot) = \sum_{j=1}^n a_j k(\cdot, \bbz_j \,;\, w_j)$ and take~$h_m(\cdot) = \int_{\ccalX \times \ccalW} \alpha_m(\bbz,w) k(\cdot,\bbz \,;\, w) d\bbz dw$ with
\begin{equation}\label{eqn_alpha_n}
	\alpha_m(\bbz,w) = \sum_{j = 1}^n a_j r_m(w - w_j) \prod_{k = 1}^p r_m([\bbz]_k-[\bbz_j]_k)
		\text{,}
\end{equation}
where~$[\bbz]_k$ indicates the~$k$-th element of the vector~$\bbz$. Note that~$\alpha_m \in L_2$, so that~$h_m$ is indeed of the form~\eqref{eqn_integral_rep}. Since the reproducing kernel is continuous, it readily holds from~\eqref{eqn_alpha_n} that~$g_m \to f$ pointwise for all~$\bbx \in \ccalX$.
\end{proof}

Proposition~\ref{T:equivalence} shows that there is no loss in using~\eqref{eqn_integral_rep} since it can essentially represent all functions in the sum space~$\ccalH$ of interest. Though this result is straightforward when~\eqref{eqn_integral_rep} defines the inner product in~$\ccalH$ or if~$\alpha$ allowed distributions~(Dirac deltas), the former is typically not the case~(except for sinc kernels) and the latter does not hold since~$\alpha \in L_2$. Also, it would appear that this integral representation did nothing but aggravate the computational complexity problems~[(iii)], now that the coefficients~$\alpha$ are functions. What is more, estimating~$\alpha$ is once again a functional problem. In the sequel, we address the computational complexity issue by explicitly minimizing the support of~$\alpha$ using sparse functional optimization programs~(SFPs). In Section~\ref{sec:dual}, we then show that these optimization problems can be solved efficiently and exactly using duality, providing an explicit algorithm to compute~$\alpha$.

\subsection{A Sparse Functional Formulation}

Proposition~\ref{T:equivalence} suggests that the centers and kernel parameters can be obtained by leveraging sparsity. Indeed, notice from~\eqref{eqn_alpha_n} that functions in~$\ccalH$ admit an integral representation~\eqref{eqn_integral_rep} in which~$\alpha$ is a superposition of bump functions centered around~$(\bbz_j,w_j)$, i.e., a function that vanishes over most of its domain. In other words, functions in~$\ccalH$ admit a \emph{sparse} integral parametrization. This observation motivates estimating~$f$, equivalently~$\alpha$, using the following SFP:
\begin{prob}\label{eqn_the_problem}
	\minimize_{\alpha \in L_2(\ccalX \times \ccalW)}&
		&&\frac{1}{2} \Vert \alpha \Vert_{L_2}^2 + \gamma \Vert \alpha \Vert_{L_0}
	\\
	\subjectto& &&c(\yhat_i, y_i) \leq 0 \text{,} \quad i = 1, \dots, N
		\text{,}
	\\
	&&&\yhat_i = f(\bbx_i) = \int_{\ccalX \times \ccalW}
		\alpha(\bbz, w) k(\bbx_i, \bbz \hspace{0.12em};\hspace{0.12em} w) d\bbz dw
		\text{,}
\end{prob} 
where~$\gamma \geq 0$ is a regularization factor that trades-off smoothness and sparsity; $\Vert \cdot \Vert_{L_2}$ denotes the~$L_2$-norm, which induces smoothness, enhances robustness to noise, and improves the numerical properties of the optimization problem; and~$\Vert \cdot \Vert_{L_0}$ denotes the~``$L_0$-norm,'' defined as the measure of the support of a function, i.e.,
\begin{equation}\label{eqn_L0}
	\Vert \alpha \Vert_{L_0} = \int_{\ccalX \times \ccalW}
		\indicator \left[ \alpha(\bbz, w) \neq 0 \right] d\bbz dw
		\text{,}
\end{equation}
where~$\indicator[x \neq 0] = 1$ if~$x \neq 0$ and zero otherwise. Notice that~\eqref{eqn_L0} is the functional counterpart of the discrete~``$\ell_0$-norm''.

Problem~\eqref{eqn_the_problem} seeks the function~$f \in \ccalH$ with the sparsest integral representation that fits the data according to the convex loss~$c$, which is both non-convex and infinite dimensional. As it is, it therefore appears to be intractable. Before addressing this matter, however, we argue that its solutions would indeed cope with the statistical and computational issues of classical RKHS methods. In Section~\ref{sec:dual}, we then derive efficient algorithms to obtain these solutions.

To be sure, \eqref{eqn_the_problem} precludes the choice of a specific RKHS~[issue~(i)] or kernel centers by leveraging the integral representation~\eqref{eqn_integral_rep}. Simultaneously, it enables the solution to locally adapt the RKHS over the domain to account for functions with heterogeneous degrees of smoothness~[issue~(ii)]. Finally, the sparse solutions of~\eqref{eqn_the_problem} can be used to obtain low complexity representations of functions in the sum space~$\ccalH$, i.e., representations with a small number of kernels~[issue~(iii)]. Intuitively, we can obtain a finite series from the integral representation~\eqref{eqn_integral_rep} by using the bumps in~$\alpha$ to determine the pair~$(\bbz_j,w_j)$ that define~\eqref{eqn_general_rep}, as suggested by~\eqref{eqn_alpha_n}. The~$a_j$ can then be obtained by directly minimizing~$c$. 

Although the remainder of this work studies the general~\eqref{eqn_the_problem}, two particular cases are of marked interest. First, the functional space of the solution is sometimes dictated by the application. For instance, one may seek bandlimited functions of a specific bandwidth. In this case, the reproducing kernel~$k$ and its parameter~$w_0$ are fixed and~\eqref{eqn_the_problem} reduces to
\begin{prob}[$\text{PII}^\prime$]\label{centersearch}
	\minimize_{\alpha \in L_2(\ccalX)}& &&\frac{1}{2} \Vert \alpha \Vert_{L_2}^2
		+ \gamma \Vert \alpha \Vert_{L_0}
	\\
	\subjectto& &&c(\yhat_i, y_i) \leq 0 \text{,} \quad i = 1, \dots, N
		\text{,}
	\\
	&&&\yhat_i = f(\bbx_i) = \int_{\ccalX}
		\alpha(\bbz) k(\bbx_i, \bbz \,;\, w_0) d\bbz
		\text{.}
\end{prob} 
Problem~\eqref{centersearch} sets to find solutions that use as few kernels as possible and its often tackled using greedy methods such as KOMP~\cite{vincent2002kernel}. Despite its success, its applications either implicitly relies on the representer theorem~\cite{koppel2017parsimonious}, which does not hold for sparse problems~(see Remark~\ref{R:sparsity}), or use a grid search over the space, which can quickly become prohibitive. In contrast, the solution of~\eqref{centersearch} is guaranteed to provide the sparsest integral representation. These can then be approximated using the aforementioned peak finding method to yield low complexity, discrete solutions. Problem~\eqref{centersearch} can therefore obtain solutions with the same cost and lower complexity, as illustrated in Section~\ref{sec:sims}.

Second, a set of candidate kernel centers might be available \emph{a priori}, given by domain experts or unsupervised learning techniques such as clustering. Problem~\eqref{eqn_w_only} can then be used to optimally select a subset of these centers as well as determine the appropriate RKHS. Explicitly,
\begin{prob}[$\text{PII}^{\prime\prime}$]\label{eqn_w_only}
	\minimize_{\alpha \in L_2(\ccalW)}&
		&&\sum_{j = 1}^M \left[ \frac{1}{2} \Vert \alpha_j \Vert_{L_2}^2
			+ \gamma \Vert \alpha_j \Vert_{L_0} \right]
	\\
	\subjectto& &&c(\yhat_i, y_i) \leq 0 \text{,} \quad i = 1, \dots, N
		\text{,}
	\\
	&&&\yhat_i = f(\bbx_i) = \sum_{j = 1}^M \int_{\ccalW}
		\alpha_j(w) k(\bbx_i, \bbz_j \,;\, w) dw
		\text{,}
\end{prob} 
where~$\bbz_j$, for~$j = 1,\dots,M$, are the predefined candidate centers. Observe that~\eqref{eqn_w_only} promotes the sparsity of each~$\alpha_j$, so that the coefficient of those centers that do not contribute to satisfy the fit constraint will vanish. Hence, the solution of~\eqref{eqn_w_only} effectively selects the smallest subset of candidate centers. Furthermore, by locally adapting the RKHS, it can further reduce the number of kernels~(centers) in the final solution by using less kernels to fit smoother portions of the data. The formulation~\eqref{eqn_w_only} is particularly attractive for high dimensional problems for which evaluating the integrals in~\eqref{eqn_the_problem} may be challenging.

In the next section, we derive a method for solving~\eqref{eqn_the_problem} exactly and efficiently. Solutions to problems~\eqref{centersearch} and~\eqref{eqn_w_only} were presented in~\cite{peifer2019sparse}~and~\cite{peifer2018locally} respectively. We do so by formulating the dual problem of~\eqref{eqn_the_problem} and showing that its solution can be used to obtain an optimal~$\alpha^\star$~(strong duality). This then allows us to propose efficient solutions for~\eqref{eqn_the_problem} by means of its dual problem. This strong duality result is also exploited to derive a new integral representer theorem~(Corollary~\ref{T:representer}) that accounts for sparsity.

%%%%%%%%%%%%%%%%%%%%%%%%%%%%%%%%%%%%%%%
%%% SECTION : Dual Problem          %%%
%%%%%%%%%%%%%%%%%%%%%%%%%%%%%%%%%%%%%%%
\section{Learning in the Dual Domain}\label{sec:dual}

%!TEX root = mkl.tex
%%%%%%%%%%%%%%%%%%%%%%%%%%%%%%%%%%%%%%%%%%%%%%%%%%%%%%%%%%%%%%%%%%%%%%%%%%%%%
%%%   S   E   C   T   I   O   N   %%%%%%%%%%%%%%%%%%%%%%%%%%%%%%%%%%%%%%%%%%%
%%%%%%%%%%%%%%%%%%%%%%%%%%%%%%%%%%%%%%%%%%%%%%%%%%%%%%%%%%%%%%%%%%%%%%%%%%%%%

Having argued that~\eqref{eqn_the_problem}~[or~\eqref{centersearch}--\eqref{eqn_w_only}] is worth solving, we now return to the issue of how. To understand the challenge of directly tackling~\eqref{eqn_the_problem}, observe that it is a non-convex, infinite dimensional optimization program. Moreover, its discrete version is in general NP-hard to solve~\cite{Natarajan95s}. In the sequel, we address these hurdles using duality. It is worth noting that duality is an established approach to solve semi-infinite convex programs~\cite{Shapiro06d, Tang13c, Candes14t}. Indeed, dual problems are convex and their dimension is equal to the number of constraints. Thus, they can be solved efficiently. Moreover, it is well-known that when the original problem is convex, its solutions can be obtained from solutions of the dual problem under mild conditions~\cite{Boyd04c}. Nevertheless, this is not the case of~\eqref{eqn_the_problem}.

In the sequel, we derive a method to solve~\eqref{eqn_the_problem} by first obtaining its dual problem in closed-form~(Section~\ref{sec_dual_problem}). Then, we show that strong duality holds~(Section~\ref{sec_strong}). Consequently, solutions of~\eqref{eqn_the_problem} can be obtained from solutions of its dual problem. We then conclude by showing how to efficiently solve the latter~(Section~\ref{sec_solver}).

\subsection{The dual problem of~\eqref{eqn_the_problem}}
	\label{sec_dual_problem}

To derive the dual problem of~\eqref{eqn_the_problem}, start by introducing the Lagrange multipliers~$\bblambda \in \reals^N$, associated with its equality constraints and~$\bbmu \in \reals^N_{+}$, associated with its inequality constraints. Its Lagrangian is then defined as
\begin{equation}\label{eqn_lagrangian}
\begin{aligned}
	\ccalL(\alpha,\boldsymbol{\yhat},\bblambda,\bbmu) &=
		\frac{1}{2} \Vert \alpha \Vert_{L_2}^2 + \gamma \Vert \alpha \Vert_{L_0}
    \\
	{}&- \sum_{i = 1}^N \bblambda_i \int
		\alpha(\bbz, w) k(\bbx_i, \bbz \,;\, w) d\bbz dw
	\\
	{}&+ \sum_{i = 1}^N \bblambda_i \yhat_i + \sum_{i = 1}^N \bbmu_i c(\yhat_i, y_i)
		\text{.}
\end{aligned} 
\end{equation}
For conciseness, we omit the set~$\ccalX \times \ccalW$ over which the integrals are computed. To proceed, obtain the dual function by minimizing the Lagrangian as in
\begin{equation}\label{eqn_dual_function} 
	g(\bblambda,\bbmu) = \min_{\substack{\alpha \in L_2 \\ \yhat_i \in \reals}}
		\ccalL(\alpha,\boldsymbol{\yhat},\bblambda,\bbmu)
		\text{.}
\end{equation}
The dual function is the objective of the dual problem, defined explicitly as
\begin{prob}[DII]\label{dual_prob} 
	\maximize_{\bblambda\in\reals^N,\, \bbmu \in\reals^N_+}& &&g(\bblambda,\bbmu)
		\text{.}
\end{prob}

Notice that the dual function~\eqref{eqn_dual_function} is concave regardless of the convexity of the original problem since it is the minimum over a set of affine functions in~$(\bblambda,\bbmu)$. What is more, despite the infinite dimensionality of the original problem, it is defined over~$2N$ variables. Hence, \eqref{dual_prob} is a finite dimensional convex program, which can be solved efficiently as long as its objective can be evaluated efficiently.

Yet, computing~$g$ involves a functional, non-convex problem. Indeed, notice that~\eqref{eqn_dual_function} can be separated as
\begin{equation}\label{eqn_dual_separated}
\begin{aligned}
	g(\bblambda,\bbmu) &= \min_{\yhat_i \in \reals} \ccalL_{\yhat}(\boldsymbol{\yhat},\bblambda,\bbmu)
		+  \min_{\alpha \in L_2} \ccalL_{\alpha}(\alpha,\bblambda)
		\text{,}
\end{aligned} 
\end{equation}
where
\begin{equation}\label{eqn_g_yhat}
	\ccalL_{\yhat}(\boldsymbol{\yhat},\bblambda,\bbmu) = 
		\sum_{i = 1}^N \bbmu_i c(\yhat_i, y_i) + \sum_{i = 1}^N \bblambda_i \yhat_i
\end{equation}
is the value a convex optimization problem, since~$c$ is convex and~$\bbmu_i \geq 0$, and
\begin{multline}\label{eqn_g_alpha}
	\ccalL_{\alpha}(\alpha, \bblambda) = \int \left[\vphantom{\sum}
		\frac{1}{2} \alpha^2(\bbz,w) + \gamma \indicator\left[ \alpha(\bbz,w) \neq 0 \right]
	\right.
	\\
	{}- \left.\vphantom{\sum}
		\sum_{i = 1}^N \bblambda_i \alpha(\bbz, w) k(\bbx_i, \bbz \,;\, w)
	\right] d\bbz dw
		\text{,}
\end{multline}
where we used the integral definitions of the~``$L_0$-norm'' in~\eqref{eqn_L0} and the~$L_2$-norm. Hence, evaluating~$g$ involves solving a non-convex functional optimization problem similar to the original~\eqref{eqn_the_problem}. Here, however, we can exploit separability to obtain the closed-form thresholding solution presented in the following proposition.

\begin{proposition}\label{prop_Lagrangian_solution}

A minimizer~$\alpha_d$ of~\eqref{eqn_g_alpha} is given by
\begin{equation}\label{alphaopt}  
	\alpha_d(\bbz,w ; \bblambda) = \begin{cases}
		\bar{\alpha}_d(\bbz,w ; \bblambda)
			\text{,} &\left\vert \bar{\alpha}_d(\bbz,w ; \bblambda) \right\vert > \sqrt{2\gamma}
	\\
		0 \text{,} &\text{otherwise}\\
	\end{cases}
\end{equation}
where~$\bar{\alpha}_d(\bbz,w ; \bblambda) = \sum_{i = 1}^N \bblambda_i k(\bbx_i, \bbz; w)$.

\end{proposition}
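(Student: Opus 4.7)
The plan is to exploit the pointwise separability of the integrand in~\eqref{eqn_g_alpha}: the values of $\alpha$ at distinct points $(\bbz,w)$ enter the integrand independently, so the functional minimization over $L_2(\ccalX\times\ccalW)$ reduces to a family of scalar minimizations indexed by $(\bbz,w)$. Concretely, writing $b = \bar\alpha_d(\bbz,w;\bblambda) = \sum_{i=1}^N \bblambda_i k(\bbx_i,\bbz;w)$ and $u = \alpha(\bbz,w)$, the integrand takes the form
\[
\varphi_b(u) := \tfrac{1}{2}u^2 + \gamma\indicator[u \neq 0] - bu,
\]
so provided the pointwise minimizer of $\varphi_b$ produces a measurable function that lies in $L_2$, that function must also minimize $\ccalL_\alpha$.

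Next I would solve the scalar problem $\min_{u\in\reals}\varphi_b(u)$ by case analysis on the discontinuity at $u=0$. Restricted to $u\neq 0$, the function is a strictly convex quadratic with unique critical point $u=b$ and value $\varphi_b(b) = \gamma - b^2/2$. Comparing with the remaining candidate $\varphi_b(0)=0$, the point $u=b$ is optimal iff $b^2/2 > \gamma$, i.e.\ $|b| > \sqrt{2\gamma}$, and $u=0$ is optimal otherwise (with a harmless tie at $|b|=\sqrt{2\gamma}$ since the statement asks only for \emph{a} minimizer). This reproduces the hard-thresholding rule in~\eqref{alphaopt}.

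Finally I would verify that $\alpha_d$ is admissible, i.e., measurable and square-integrable. Since each $k(\bbx_i,\cdot\,;\,\cdot)$ is continuous on the compact set $\ccalX\times\ccalW$ by the assumption carried over from Proposition~\ref{T:equivalence}, $\bar\alpha_d$ is a finite linear combination of bounded continuous functions, hence bounded and measurable; hard-thresholding by the constant $\sqrt{2\gamma}$ preserves measurability, and boundedness on a compact domain yields $\alpha_d\in L_2(\ccalX\times\ccalW)$. The one subtlety to flag is the interchange between the functional infimum and the pointwise infima, but this is legitimate precisely because the pointwise selection already produces an $L_2$ function, so nothing is lost by passing to it; this is the main (and only) obstacle I anticipate in turning the sketch into a rigorous argument.
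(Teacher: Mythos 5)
Your proposal is correct and follows essentially the same route as the paper: separate the functional minimization into pointwise scalar problems, then resolve the scalar problem by comparing the value $0$ at $u=0$ against the value $\gamma - b^2/2$ of the unconstrained quadratic, which yields the hard-thresholding rule in~\eqref{alphaopt}. The only (minor) difference is that you justify the interchange of the infimum and the integral directly, by exhibiting a measurable $L_2$ selection of pointwise minimizers, whereas the paper invokes Rockafellar's interchange theorem for normal integrands (Lemma~\ref{T:minInt}); both justifications are valid and yours is, if anything, slightly more self-contained.
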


\begin{proof}

To obtain~\eqref{alphaopt}, we start by separating the objective of~\eqref{eqn_g_alpha} across~$\bbz$ and~$w$. To do so, we leverage the following lemma:

\begin{lemma}\label{T:minInt}
	Let~$F(\alpha,x)$ be a normal integrand, i.e., continuous in~$\alpha$ for fixed~$x$ and measurable in~$x$ for fixed~$\alpha$. Then,
	\begin{equation}\label{E:exchangeInf}
		\inf_{\alpha \in L_2} \int F(\alpha(x),x) dx =
			\int \inf_{\bar{\alpha} \in \reals} F(\bar{\alpha},x) dx
			\text{.}
	\end{equation}
\end{lemma}

\begin{proof}
See~\cite[Thm. 3A]{rockafellar1976integral}.
\end{proof}

\noindent Taking
\begin{equation}\label{eqn_F}
	F(\bar{\alpha},\bbz,w) = \frac{\bar{\alpha}^2}{2}
		+ \gamma \indicator\left[ \bar{\alpha} \neq 0 \right]
		- \sum_{i = 1}^N \bblambda_i k(\bbx_i, \bbz \,;\, w) \bar{\alpha}
		\text{,}
\end{equation}
in Lemma~\ref{T:minInt}, yields that minimizing~\eqref{eqn_g_alpha} is equivalent to minimizing~$F$ individually for each~$(\bbz,w)$. Although still non-convex, this problem is now scalar and admits a simple solution.

Indeed, notice that the indicator function in~\eqref{eqn_F} can only take two values depending on~$\bar{\alpha}$. Hence, its optimal value is the minimum of two cases: (i) if~$\bar{\alpha} = 0$, then~$F(0,\bbz,w) = 0$ for all~$(\bbz,w)$; alternatively, (ii)~if~$\bar{\alpha} \neq 0$, then~\eqref{eqn_F} becomes
\begin{equation}\label{eqn_Fprime}
	F^\prime(\bar{\alpha},\bbz,w) = \frac{\bar{\alpha}^2}{2}
		- \sum_{i = 1}^N \bblambda_i k(\bbx_i, \bbz \,;\, w) \bar{\alpha} + \gamma
		\text{,}
\end{equation}
whose minimization is a quadratic problem with closed-form solution
\begin{equation}\label{eqn_bar_alpha_star}
	\bar{\alpha}^\star(\bbz,w) = \argmin_{\bar{\alpha} \in \reals} F^\prime(\bar{\alpha},\bbz,w)
		= \sum_{i = 1}^N \bblambda_i k(\bbx_i, \bbz \,;\, w)
		\text{,}
\end{equation}
so that~$\min_{\bar{\alpha} \neq 0} F(\bar{\alpha}^\star,\bbz,w) = \gamma -  \bar{\alpha}^\star(\bbz,w)^2/2$. Immediately, $\alpha_d(\bbz,w) = \bar{\alpha}^\star(\bbz,w)$ if~$\gamma -  \bar{\alpha}_d(\bbz,w)^2/2 < 0$ or~$\alpha(\bbz,w)$ vanishes, which yields~\eqref{alphaopt}.
\end{proof}

Hence, despite the non-convexity and infinite dimensionality nature of~\eqref{eqn_g_alpha}, it admits an explicit solution in the form of the thresholded function~\eqref{alphaopt}. We can thus evaluate the convex objective of~\eqref{dual_prob}, that can then be solved using classical convex optimization tools such as~(stochastic) (sub)gradient ascent~(see Section~\ref{sec_solver}). Still, the question remains of whether this is a worthwhile endeavor. Indeed, \eqref{eqn_dual_function} can be interpreted as a relaxation of the hard constraints in~\eqref{eqn_the_problem}, so that~\eqref{dual_prob} provides a lower bound on its optimal value. Though we can evaluate~\eqref{alphaopt} on the solution~$(\bblambda^\star, \bbmu^\star)$ of~\eqref{dual_prob}, there are no guarantees that it is a solution of~\eqref{eqn_the_problem}. We tackle this challenge in the sequel.

\subsection{Strong duality and the integral representer theorem}
	\label{sec_strong}

Due to the non-convexity of the original problem, the only immediate guarantee we can give about the optimal value of~\eqref{dual_prob} is that it is a lower bound on the optimal value of~\eqref{eqn_the_problem}. The central technical result of this section shows that strong duality holds for~\eqref{eqn_the_problem}, i.e., it has null duality gap~(Theorem~\ref{thrm_zerodual}). This result has deep implications for the problems we posed in Section~\ref{sec:problem}. First, it implies that we can obtain a solution of~\eqref{eqn_the_problem} by solving its~\eqref{dual_prob}~(Corollary~\ref{T:primal_recovery}). Second, it allows us to write a representer theorem similar to the original ones stating that the solution~$\alpha^\star$ of~\eqref{eqn_the_problem} is a linear combination of kernels evaluated at the data points.

Let us start with the main theorem:

\begin{theorem}\label{thrm_zerodual}

Strong duality holds for~\eqref{eqn_the_problem} if the kernel~$k(\cdot,\bbz \,;\, \omega)$ has no point masses and Slater's condition is met. In other words, if~$P$ is the optimal solution of~\eqref{eqn_the_problem} and~$D$ is the optimal solution of~\eqref{dual_prob}, then the duality gap~$P-D$ vanishes.

\end{theorem}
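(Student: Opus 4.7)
The plan is to follow the sparse functional programming framework of~\cite{Chamon18s} to establish strong duality despite the non-convexity introduced by the $L_0$ term and the infinite dimensionality of the decision variable. The key insight is that although $\Vert \alpha \Vert_{L_0}$ renders the objective non-convex pointwise in $\alpha$, the integration against a non-atomic measure on $\ccalX \times \ccalW$ (which the ``no point masses'' hypothesis on $k$ ensures we can arrange) convexifies the problem in the image space of the constraints and objective. This is essentially a consequence of Lyapunov's convexity theorem for vector measures: the range of an integral of a set-valued map with respect to a non-atomic measure is convex, even when the integrand is non-convex.

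The concrete steps I would follow are: (i)~Define the perturbation function $P(\bbu) = \min \{ \tfrac{1}{2}\Vert \alpha \Vert_{L_2}^2 + \gamma \Vert \alpha \Vert_{L_0} : c(\yhat_i, y_i) \leq u_i,\ \yhat_i = \int \alpha(\bbz,w) k(\bbx_i,\bbz;w) d\bbz dw\}$, so that $P(\bb0)$ is the primal optimum of~\eqref{eqn_the_problem}. (ii)~Show that the epigraph of $P$ is convex. To do so, given two feasible pairs $(\alpha_1, \yhat^{(1)})$ and $(\alpha_2, \yhat^{(2)})$ and a convex weight $\theta \in [0,1]$, apply Lyapunov's theorem to the vector-valued non-atomic measure whose components correspond to the $L_2$-energy density, the support indicator density, and the $N$ inner products $\alpha(\bbz,w) k(\bbx_i,\bbz;w)$. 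This yields a measurable partition $(A, A^c)$ of $\ccalX \times \ccalW$ such that the function $\alpha_\theta = \alpha_1 \mathbb{I}_A + \alpha_2 \mathbb{I}_{A^c}$ exactly realizes the $\theta$-convex combination of the integrated quantities of $\alpha_1$ and $\alpha_2$, even though the $L_0$ integrand is not convex. The no-point-masses condition is what lets the reproducing kernel not concentrate mass and thus keeps the measure non-atomic on the relevant supports. (iii)~Combine convexity of $P$ with Slater's condition (which provides an interior point of the feasible set, hence $P$ is finite on a neighborhood of $\bb0$) to obtain a supporting hyperplane of the epigraph at $\bb0$; the normal vector of that hyperplane furnishes dual multipliers $(\bblambda^\star, \bbmu^\star)$ achieving $g(\bblambda^\star, \bbmu^\star) = P(\bb0)$, which is precisely the statement $P - D = 0$.

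The main obstacle will be step~(ii): carefully constructing the Lyapunov-type convexification in a way that simultaneously matches the $L_2$-norm, $L_0$-measure, and the $N$ linear constraint functionals, and verifying that the ``no point masses'' hypothesis on $k(\cdot,\bbz;w)$ together with the absolute continuity of the Lebesgue measure on $\ccalX \times \ccalW$ makes the relevant vector measure non-atomic. Once this is established, steps~(i) and~(iii) are standard in convex analysis, and the entire argument essentially reduces to invoking the generic strong duality result of~\cite[Chamon18s]{Chamon18s} with the perturbation function identified above. The conclusion that $P = D$ then follows immediately, and this also paves the way for primal recovery from dual solutions in the subsequent Corollary and for the integral representer theorem mentioned afterward.
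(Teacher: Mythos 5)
Your proposal is correct and follows essentially the same route as the paper's proof in Appendix~\ref{A:zerodual}: both define the perturbation function, apply Lyapunov's convexity theorem to a non-atomic vector measure whose components collect the objective densities and the $N$ constraint functionals, glue $\alpha_1$ and $\alpha_2$ on the resulting measurable partition to show the perturbation function is convex, and then invoke standard convex duality. The only cosmetic differences are that you split the $L_2$ and $L_0$ densities into separate measure components and phrase the final step explicitly as a supporting-hyperplane argument, neither of which changes the substance.
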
 

\begin{proof}

This result can be found in~\cite{Chamon18s}. For ease of reference, a proof is provided in Appendix~\ref{A:zerodual}.
\end{proof}

Theorem~\eqref{thrm_zerodual} states that, though it is not a convex program,~\eqref{eqn_the_problem} has null duality gap. Immediately, we obtain the following corollary:

\begin{corollary}\label{T:primal_recovery}

Let~$(\bblambda^\star,\bbmu^\star)$ be a solution of~\eqref{dual_prob} and assume~$k \in L_2$ and analytic. Then, $\alpha_d^\star(\cdot,\cdot) = \alpha_d(\cdot,\cdot ; \bblambda^\star)$ is a solution of~\eqref{eqn_the_problem} for~$\alpha_d$ as in~\eqref{alphaopt}.

\end{corollary}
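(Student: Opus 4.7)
The plan is to exploit the strong duality established in Theorem~\ref{thrm_zerodual} to certify that $\alpha_d^\star$ is not just a Lagrangian minimizer but in fact a primal optimum. Since the dual objective at $(\bblambda^\star,\bbmu^\star)$ separates, as in~\eqref{eqn_dual_separated}, Proposition~\ref{prop_Lagrangian_solution} provides an explicit minimizer $\alpha_d^\star$ of the $\alpha$-part, while the convex inner problem yields some $\yhat^\star$ for the $\yhat$-part. The assumption $k\in L_2$ ensures that $\bar{\alpha}_d(\bbz,w;\bblambda^\star)=\sum_i \bblambda_i^\star k(\bbx_i,\bbz;w)$ belongs to $L_2$, so the thresholded $\alpha_d^\star$ in~\eqref{alphaopt} is a bona fide element of $L_2(\ccalX\times\ccalW)$ and the dual bound $P=D=g(\bblambda^\star,\bbmu^\star)$ holds by Theorem~\ref{thrm_zerodual}.

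Next I would promote $\alpha_d^\star$ from a Lagrangian minimizer to a primal optimum. Let $(\alpha^\star,\yhat^\star_p)$ be any primal optimum (whose existence is established inside the proof of Theorem~\ref{thrm_zerodual}). Strong duality together with primal feasibility and complementary slackness gives
\begin{equation*}
\tfrac{1}{2}\Vert\alpha^\star\Vert_{L_2}^2 + \gamma\Vert\alpha^\star\Vert_{L_0}
 = P = D = \ccalL(\alpha^\star,\yhat^\star_p,\bblambda^\star,\bbmu^\star),
\end{equation*}
so $(\alpha^\star,\yhat^\star_p)$ minimizes the Lagrangian at $(\bblambda^\star,\bbmu^\star)$. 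In particular, $\alpha^\star$ is a minimizer of $\ccalL_\alpha(\cdot,\bblambda^\star)$. By Lemma~\ref{T:minInt} applied to the integrand $F(\cdot,\bbz,w)$ from~\eqref{eqn_F}, this forces $\alpha^\star(\bbz,w)$ to pointwise minimize the scalar function $F(\cdot,\bbz,w)$ for almost every $(\bbz,w)$.

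The proof of Proposition~\ref{prop_Lagrangian_solution} shows that the scalar minimizer is unique whenever $|\bar{\alpha}_d(\bbz,w;\bblambda^\star)|\neq\sqrt{2\gamma}$; ambiguity (between $0$ and $\bar{\alpha}_d$) can only arise on the level set $\ccalE:=\{(\bbz,w):|\bar{\alpha}_d(\bbz,w;\bblambda^\star)|=\sqrt{2\gamma}\}$. This is where the analyticity of $k$ enters: since $\bar{\alpha}_d(\cdot,\cdot;\bblambda^\star)$ is a finite linear combination of analytic functions, it is itself real-analytic on $\ccalX\times\ccalW$, and so -- unless it is constantly equal to $\pm\sqrt{2\gamma}$ -- the set $\ccalE$ has Lebesgue measure zero~(e.g.\ by the Lojasiewicz structure theorem, or the fact that the zero set of a non-trivial analytic function is a proper analytic variety). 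Consequently the pointwise minimizer of $F(\cdot,\bbz,w)$ is unique a.e., yielding $\alpha^\star=\alpha_d^\star$ in $L_2$, which is the claim.

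The main obstacle is precisely the uniqueness step on $\ccalE$: without the analyticity hypothesis, one could have $\ccalE$ of positive measure, and a Lagrangian minimizer that differs from $\alpha_d^\star$ on $\ccalE$ might fail the equality constraint in~\eqref{eqn_the_problem} and hence not be primal feasible. The degenerate case $\bar{\alpha}_d\equiv\pm\sqrt{2\gamma}$ has to be dispatched separately -- it can only occur for very particular choices of $\bblambda^\star$, and one checks feasibility of $\alpha_d^\star$ directly via complementary slackness, since by the argument above any primal optimum also achieves the Lagrangian minimum at $(\bblambda^\star,\bbmu^\star)$.
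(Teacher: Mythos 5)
Your proof is correct and follows essentially the same route as the paper's: strong duality forces any primal optimum to minimize the separable Lagrangian, Lemma~\ref{T:minInt} reduces this to pointwise minimization of $F(\cdot,\bbz,w)$, the scalar minimizer is unique off the level set $\{(\bbz,w) : \vert\bar{\alpha}_d(\bbz,w;\bblambda^\star)\vert=\sqrt{2\gamma}\}$, and analyticity of $k$ renders that set Lebesgue-null, giving $\alpha^\star=\alpha_d^\star$ in $L_2$. If anything, your explicit dispatch of the degenerate case $\bar{\alpha}_d\equiv\pm\sqrt{2\gamma}$ and your appeal to the structure of zero sets of nontrivial real-analytic functions is slightly more careful than the paper's assertion that such zeros are ``isolated and therefore countable,'' which is literally true only in one variable (though the measure-zero conclusion both of you need still holds).
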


\begin{proof}
See Appendix~\ref{A:primal_recovery}.
\end{proof}

\noindent Thus, despite the apparent challenges, \eqref{eqn_the_problem} is tractable and can be solved efficiently and exactly using duality~(as detailed in Section~\ref{sec_solver}). It is worth noting that the technical hypotheses of Corollary~\ref{T:primal_recovery} are mild and hold for all commonly used reproducing kernels, such as Gaussian, polynomial, and sinc kernels. A fundamental feature of this approach is that it solves the functional problem without relying on discretizations. This is of utmost importance since discretizing~\eqref{eqn_the_problem} can lead to NP-hard, large dimensional, and potentially ill-conditioned problems. Moreover, it tackles the sparse problem directly instead of using convex relaxations.

Another fundamental implication of Theorem~\ref{thrm_zerodual} is the following integral representer theorem.

\begin{corollary}[Integral representer theorem]\label{T:representer}

A solution~$\alpha^\star$ of~\eqref{eqn_the_problem} can be obtained by thresholding a parametrized family of functions~$\bar{\alpha}^\star_w \in \ccalH_w$, where~$\ccalH_w$ is the RKHS induced by the kernel~$k(\cdot,\cdot;w)$. In fact, $\bar{\alpha}^\star_w$ lives in a finite dimensional subspace of~$\ccalH_w$ spanned by the kernels evaluated at the data points. Explicitly, there exist~$a_i \in \reals$ such that
\begin{equation}\label{eqn_alpha_bar}
	\alpha^\star(\cdot,w) = \bar{\alpha}^\star_w(\cdot) = \sum_{i = 1}^N a_i k(\bbx_i, \cdot; w)
		\text{.}
\end{equation}
\end{corollary}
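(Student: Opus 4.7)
The plan is to derive the corollary as a direct structural consequence of the two preceding results: the explicit form~\eqref{alphaopt} of the minimizer of the $\alpha$-part of the Lagrangian from Proposition~\ref{prop_Lagrangian_solution}, and the primal recovery guarantee in Corollary~\ref{T:primal_recovery}. By Theorem~\ref{thrm_zerodual}, strong duality holds, so any solution $(\bblambda^\star,\bbmu^\star)$ of the dual~\eqref{dual_prob} yields, through the formula in~\eqref{alphaopt}, a solution of~\eqref{eqn_the_problem}. I would begin by fixing such a dual optimizer and setting $\alpha^\star(\bbz,w) = \alpha_d(\bbz,w\,;\,\bblambda^\star)$.

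Next, I would read off from~\eqref{alphaopt} that $\alpha^\star$ is the pointwise thresholding, at level $\sqrt{2\gamma}$, of the auxiliary function $\bar{\alpha}^\star_w(\cdot) := \sum_{i=1}^N \lambda_i^\star\, k(\bbx_i,\cdot\,;\,w)$. For each fixed $w \in \ccalW$, this $\bar{\alpha}^\star_w$ is a finite linear combination of the reproducing kernels $k(\bbx_i,\cdot\,;\,w)$ of $\ccalH_w$ evaluated at the data points, and therefore lies in the at most $N$-dimensional subspace $\sspan\{k(\bbx_i,\cdot\,;\,w) : i = 1,\dots,N\} \subset \ccalH_w$. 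Setting $a_i = \lambda_i^\star$ yields the representation~\eqref{eqn_alpha_bar}, with the identification $\alpha^\star(\cdot,w) = \bar{\alpha}^\star_w(\cdot)$ on the set $\{|\bar{\alpha}^\star_w| > \sqrt{2\gamma}\}$ following from the top branch of~\eqref{alphaopt}.

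Since the genuinely hard work, namely zero duality gap~(Theorem~\ref{thrm_zerodual}) and the closed-form thresholding solution of the inner Lagrangian problem~(Proposition~\ref{prop_Lagrangian_solution}), has already been done, I do not foresee a major technical obstacle here. The only subtle point worth flagging is that the coefficients $\{a_i\}$ of the representation are universal across smoothness parameters: they are precisely the dual multipliers of the $N$ equality constraints in~\eqref{eqn_the_problem}, so the same $N$ scalars parametrize the entire pre-threshold family $\{\bar{\alpha}^\star_w\}_{w \in \ccalW}$, with only the kernel smoothness parameter $w$ varying. This observation underscores that the classical representer-theorem structure is preserved slice-by-slice in $w$, and the sparsity enters solely through the thresholding operation.
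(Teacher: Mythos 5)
Your proposal is correct and follows essentially the same route as the paper: invoke Corollary~\ref{T:primal_recovery} to identify $\alpha^\star$ with $\alpha_d(\cdot,\cdot;\bblambda^\star)$ and then read the thresholded finite kernel expansion directly off~\eqref{alphaopt}, taking $a_i = \bblambda_i^\star$. Your added remark that the identification $\alpha^\star(\cdot,w)=\bar{\alpha}^\star_w(\cdot)$ strictly holds only on the super-threshold set is a small but welcome precision over the paper's own statement of~\eqref{eqn_alpha_bar}.
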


\begin{proof}

From Corollary~\ref{T:primal_recovery}, $\alpha^\star = \alpha_d^\star$ almost everywhere with~$\alpha_d^\star(\bbz,w) = \alpha_d(\bbz,w;\bblambda_i^\star)$. Thus, the corollary stems from~\eqref{alphaopt} for~$\alpha_w(\cdot) = \sum_{i = 1}^N \bblambda_i^\star k(\bbx_i, \cdot; w)$, so that~$\alpha_w \in \ccalH_w$ by definition and~$\alpha^\star(\bbz,w) = \alpha_w(\bbz) \indicator(\vert \alpha_w(\bbz) \vert > \sqrt{2\gamma})$.
\end{proof}

Recall that the classical representer theorem~\cite{kimeldorf1971some, scholkopf2001generalized} reduces the functional problem~\eqref{eqn_funct_prob} to the finite dimensional problem~\eqref{eqn_finite_prob} by showing that it has a solution of the form~\eqref{eqn_representer} that lies in the span of the kernels evaluated at the data points. Likewise, Theorem~\ref{T:representer} states that a solution of~\eqref{eqn_the_problem} can be obtained from a family of functions with the similar finite representation~\eqref{eqn_alpha_bar}. Note, however, that although classical representer theorems do not account for sparse solution~(as argued in Remark~\ref{R:sparsity}), Theorem~\ref{T:representer} holds in the presence of the ``$L_0$-norm'' regularizer. The cost of doing so is adding a layer of indirection with respect to the original functional problem: whereas the classical representer theorem yields a function in the RKHS directly from the parameter~$a_i$ through~\eqref{eqn_representer}, Corollary~\ref{T:representer} states that the functional parameters of the integral representation~\eqref{eqn_integral_rep} of this function can be obtained from the~$a_i$. Still, the problem it solves is considerably more complex and comprehensive than~\eqref{eqn_funct_prob}.

It is worth noting that although~\eqref{eqn_the_problem} searches for~$\alpha \in L_2$, Corollary~\ref{T:representer} shows its solution depends only on a family of functions belonging to the RKHSs considered. Explicitly, the solution~$\alpha^\star(\cdot,w)$ of~\eqref{eqn_the_problem} is a thresholded version of a function in the RKHS~$\ccalH_w$ with reproducing kernel~$k(\cdot,\cdot;w)$. In particular, when considering the problem without sparsity~($\gamma = 0$), we have that~$\alpha^\star(\cdot,w) \in \ccalH_w$. In the case of~\eqref{centersearch}, this further simplifies to~$\alpha^\star \in \ccalH_0$.

This observation allows us to think of~\eqref{eqn_integral_rep} as building the function~$h^\star$ point-by-point by integrating the value of partial $L_2$-inner products between the reproducing kernel of~$\ccalH_w$ and a function~$\bar{\alpha}_w \in \ccalH_w$. To be more specific, notice that~\eqref{eqn_integral_rep} can be written as the iterated integrals
\begin{equation}\label{eqn_iterated}
	h^\star(\cdot) = \int_{\overline{\ccalW}}
		\left[ \int_{\overline{\ccalX}} \bar{\alpha}_w(\bbz) k(\cdot,\bbz ; w) d\bbz \right] dw
		\text{,}
\end{equation}
where~$\overline{\ccalX} \subseteq \ccalX$, $\overline{\ccalW} \subseteq \ccalW$, and~$\overline{\ccalX} \times \overline{\ccalW}$ is the set induced by the support of~$\alpha^\star$, i.e., $\{ (\bbz,w) \in \ccalX \times \ccalW \mid \vert\alpha(\bbz,w)\vert > \sqrt{2\gamma} \}$. The innermost integral in~\eqref{eqn_iterated} can be interpreted as an inner product in~$L_2$ between~$\bar{\alpha}_w$ and~$k(\cdot,\bbz ; w)$ computed only where the magnitude of~$\bar{\alpha}_w$ is large enough, defined by the regularization parameter~$\gamma$. This sort of trimmed inner product is linked to robust projections found in different statistical methods~\cite{Chen13r, Feng14r}. The outer integral then accumulates the projections of~$\bar{\alpha}_w$ over the relevant subset~$\overline{\ccalW}$ of RKHSs considered to form the functional solution.

Before proceeding, it is worth noting that Theorem~\ref{thrm_zerodual} holds under very mild conditions. Indeed, the reproducing kernels typically used in applications, such as polynomial or Gaussian kernels, do not have point masses. In fact, if the function of interest is in~$L_2$, i.e., $\ccalH \subseteq L_2$, then we need not consider kernels containing Dirac deltas since they are not square integrable. As for Slater's condition~\cite{Boyd04c}, the infinite dimensionality of~$\alpha$ makes it so we can always find one that perfectly interpolates the data, though it may neither be smooth nor have a sparse representation. Hence, finding a strictly feasible solution of~\eqref{eqn_the_problem} is straightforward for most~$c$.

In the next section, we leverage the closed-form of the dual function from Proposition~\ref{prop_Lagrangian_solution} and the strong duality result from Theorem~\ref{thrm_zerodual} to obtain an explicit algorithm to solve~\eqref{eqn_the_problem}.

\subsection{Dual gradient ascent}
	\label{sec_solver}

\begin{algorithm}[t]
\caption{Stochastic optimization for~\eqref{eqn_the_problem}}\label{alg_solver}
\begin{algorithmic}[1]
	\State Initialize~$\bblambda_i(0)$ and~$\bbmu_i(0) > 0$ 
	
	\For {$t = 0, 1, \dots, T$}
		\State Evaluate the supergradient~$d_{\bbmu_i}(t) = c\left( \yhat_{d,i}(t), y_i \right)$ for
		\begin{equation*}
			\yhat_{d,i}(t) = \argmin_{\yhat_i} \sum_{i = 1}^N \bbmu_i(t) c(\yhat_i, y_i)
				- \sum_{i = 1}^N \bblambda_i(t) \yhat_i
		\end{equation*}

		\State Draw~$\{(\bbz_k,w_k)\}$, $k = 1,\dots,B$, uniformly at random and compute the stochastic supergradient
		\begin{equation*}
			\hat{d}_{\bblambda_i}(t) = \yhat_{d,i}(t)
				- \frac{1}{B} \sum_{k = 1}^B \alpha_d(\bbz_k, w_k; \bblambda_i(t)) k(\bbx_i, \bbz_k; w_k)
		\end{equation*}

		\State Update the dual variables:
		\begin{align*}
			\bblambda_i(t+1) &= \bblambda_i(t) + \eta_\lambda \hat{d}_{\bblambda_i}(t)
			\\
			\bbmu_i(t+1) &= \left[ \bbmu_i(t) + \eta_\mu d_{\bbmu_i}(t) \right]_+
		\end{align*}

	\EndFor 

	\State Evaluate the primal solution as
	\begin{equation*}
		\alpha^\star(\bbz, w) = \begin{cases}
			\bar{\alpha}^\star(\bbz, w)
			\text{,} &\left\vert \bar{\alpha}^\star(\bbz,w) \right\vert > \sqrt{2\gamma}
			\\
			0 \text{,} &\text{otherwise}
		\end{cases}
	\end{equation*}
	for~$\bar{\alpha}^\star(\bbz, w) = \sum_{i = 1}^N \bblambda_i(T) k(\bbx_i, \bbz; w)$
\end{algorithmic}

\end{algorithm}

Theorem~\ref{thrm_zerodual} shows that we can obtain a solution of~\eqref{eqn_the_problem} through~\eqref{alphaopt}. Still, although~\eqref{alphaopt} can be evaluated using the closed form expression from Proposition~\ref{prop_Lagrangian_solution}, it requires the optimal dual variables~$(\bblambda^\star,\bbmu^\star)$. In this section, we propose a projected supergradient ascent method to solve~\eqref{dual_prob}~(Algorithm~\ref{alg_solver}). Alternatively, other standard convex optimization algorithms can be used to exploit structure in the solution of~\eqref{dual_prob}. For instance, efficient solvers based on coordinate ascent can be leveraged to solve large-scale instances~\cite{Bertsekas15c}.

Start by recalling that a supergradient of a function~$f: \ccalD \to \reals$ at~$\bbx \in \ccalD \subseteq \reals^n$ is any vector~$\bbd$ such that~$f(\bby) \leq f(\bbx) + \bbd^T (\bby - \bbx)$ for all~$\bby \in \ccalD$. Though supergradients may not be an ascent direction at~$\bbx$, taking small steps in its direction decreases the distance to any maximizer of a convex function~$f$~\cite{Boyd04c}. Thus, we can solve~\eqref{dual_prob} by repeating, for~$t = 0,1,\dots$,
\begin{subequations}\label{eqn_grad_ascent}
\begin{align}
	\bblambda_i(t+1) &= \bblambda_i(t) + \eta_\lambda d_{\bblambda_i}(\bblambda_i,\bbmu_i)
		\text{,}
		\label{eqn_grad_ascent_lambda}
	\\
	\bbmu_i(t+1) &= \left[ \bbmu_i(t) + \eta_\mu d_{\bbmu_i}(\bblambda_i,\bbmu_i) \right]_+
		\text{,}
		\label{eqn_grad_ascent_mu}
\end{align}
\end{subequations}
where~$\bbd_{\bblambda}^{(t)},\bbd_{\bbmu}^{(t)}$ are the supergradients of~$\bblambda$ and~$\bbmu$, respectively, $\eta_{\lambda}, \eta_{\mu}>0$ are step sizes, and~$[x]_+ = \max(0,x)$. The projection of~$\bbmu_i$ on the non-negative numbers guarantees that the constraints of~\eqref{dual_prob} are satisfied. The supergradient in~\eqref{eqn_grad_ascent} are readily obtained from the constraint violation of the dual minimizers~\cite{Boyd04c}. Explicitly, let~$\yhat_{d,i}(\bblambda,\bbmu)$ and~$\alpha_d(\bbz, w;\bblambda)$ be minimizers of~\eqref{eqn_g_yhat} and~\eqref{eqn_g_alpha} respectively. Then,
\begin{subequations}\label{eqn_subgrad}
\begin{align}
	d_{\bblambda_i}(\bblambda,\bbmu) &= \yhat_{d,i}(\bblambda, \bbmu)
		- \int \alpha_d(\bbz, w; \bblambda) k(\bbx_i, \bbz; w) d\bbz dw
		\text{,}
		\label{eqn_subgrad_lambda}
	\\
	d_{\bbmu_i}(\bblambda,\bbmu) &= c\left( \yhat_{d,i}(\bblambda, \bbmu), y_i \right)
		\label{eqn_subgrad_mu}
		\text{.}
\end{align}
\end{subequations}

Since~$\yhat_{d,i}$ is the solution of the convex optimization problem~\eqref{eqn_g_yhat}, the update for the dual variables~$\bbmu_i$ in~\eqref{eqn_grad_ascent_mu} can be efficiently evaluated using~\eqref{eqn_subgrad_mu}. The update expression for~$\bblambda_i$ in~\eqref{eqn_grad_ascent_lambda}, however, requires that the integral in~\eqref{eqn_subgrad_lambda} be evaluated. To do so, we can either use numerical integration methods, since~$\alpha_d$ is available in closed-form from~\eqref{alphaopt}, or rely on Monte Carlo methods. The latter approach is especially interesting because it can be integrated with the optimization iterations in~\eqref{eqn_grad_ascent} to obtain a stochastic supergradient ascent algorithm summarized in Algorithm~\ref{alg_solver}. Since Monte Carlo gives an unbiased estimate of~$d_{\bblambda_i}$, typical convergence guarantees for stochastic optimization apply~\cite{Ruszczynski86c, Ribeiro10e, Bottou16o}.

%%%%%%%%%%%%%%%%%%%%%%%%%%%%%%%
%%% SECTION : Simulations   %%%
%%%%%%%%%%%%%%%%%%%%%%%%%%%%%%%

\section{Numerical Experiments} \label{sec:sims}

%!TEX root = mkl.tex
%%%%%%%%%%%%%%%%%%%%%%%%%%%%%%%
%%% SECTION : Simulations   %%%
%%%%%%%%%%%%%%%%%%%%%%%%%%%%%%%

In the previous sections we have claimed that our algorithm can estimate (i) kernel widths \eqref{eqn_w_only}, (ii) kernel centers \eqref{centersearch}, and (iii) kernels of varying centers and widths \eqref{eqn_the_problem}. In this section we show through a sample signal, how we can achieve claim (i). Then we show how moving from (i) to (iii) reduces complexity.
In our discussion about the complexity of the representation in section \ref{complexity}, we show how we can achieve (ii) on random signals of fixed width. In section \ref{var_smooth}, we solve \eqref{eqn_the_problem} for a signal of varying degrees of smoothness and show how we can reduce complexity regardless of sample size. Lastly, in sections \ref{Localization} and \ref{MNIST} we apply our algorithm to solve \eqref{eqn_the_problem} and \eqref{eqn_w_only} on two examples of real applications: a user localization problem and a digit classification problem.

For the estimation, we search over functions in the family of RKHSs, which have Gaussian functions as kernels

\begin{equation}
k(\bbx,\bbx^\prime) = exp \left\lbrace \frac{-\Vert \bbx - \bbx^\prime \Vert^2 }{2 w^2} \right\rbrace,
\end{equation}
where width of the kernel is directly proportional to the hyper-parameter $w$.  

To start, the effect of the choice of RKHS on the performance of a learning algorithm is examined. To this end, a signal, which lies in the RKHS with a Gaussian kernel of width $w_0 = 0.453$ is constructed. The classical problem in \eqref{eqn_finite_prob} is compared to the problem presented in \eqref{eqn_w_only}. A grid search is used to examine the performance of \eqref{eqn_finite_prob} for different values of $w$. The value of $w_0$ was chosen such that it would not be directly on the grid, since in practice it is unlikely to include the value of the width of the originating signal. 
We generate $S$ signals of the form

\begin{equation}\label{eq:mixedGauss}
f_{j}(x) = \sum_{i=1}^{m} a_i \times \exp\left[-\frac{\Vert\bbx-\tilde{\bbx}_i\Vert^2}{2*w_0^2}\right] + \xi_j
\end{equation}
with $j=1\ldots S$. For each $f_j$ a training set of $N = 50$ samples was generated with  $m=10$.
The amplitude $a_i$ of each function is selected at random from a uniform distribution $\mathcal{U}(1,2)$. The $\tilde{\bbx}_i$ are i.i.d random variables drawn from the uniform distribution $\mathcal{U}(1,2)$ and the $\xi_j$  are i.i.d. random variables drawn from $\ccalN(0, 10^{-3})$, which represent the noise. 

It should be noted that, given sufficient iterations, well chosen step sizes, and a large $\gamma$, our method can approximate point masses. However, smoother approximations of the point masses can be obtained by using only few iterations. Additionally, these smooth approximations are more robust to the choice of the tuning parameters. Kernel centers and widths can subsequently be obtained by selecting the extreme points of the function $\alpha(\bbz, w)$, since the optimal $\alpha(\bbz,w)$ is a function of $w$, the kernel  width, and $\bbz$ the kernel centers. Kernels using the widths and centers approximated from the extreme points of the $\alpha(\bbz,w)$ are used to train a least squared estimator.

A grid search is performed for problem \eqref{eqn_finite_prob} by uniformly sampling $w$ over the interval [0,1] at 0.1 increments. The problem \eqref{eqn_w_only} is solved using $\gamma=4000$, $\eta_\lambda = 0.001$, $\eta_\mu = 0.1$ and $T=5000$. The performance of the two algorithms is compared over $1000$ repetitions of the sampled signal, each with a training set of size $N = 100$ and a test set of size $N_{test}=1000$. The MSE of \eqref{eqn_finite_prob} decreases as the value of $w$ increases---see Figure \ref{fig:GridSearch}. Due to the non-uniform sampling of the signal, smoother kernels on average have a better performance. In areas, in which the sampling is sparse, the thinner kernels cannot represent the signal between the samples. Additionally, the thinner kernels are more likely to overfit to the noise than the smoother kernels. However, the smoother kernels cannot model the faster variation in the signal well. In contrast \eqref{eqn_w_only} finds a sparse solution, which uses 14 kernels on average, of varying smoothness, with an average MSE of $0.0457$, which can both take advantage of the ability of smoother kernels to avoid overfitting as well as thinner kernels to model fast variation. Indeed, we observe in Figure \ref{fig:Histogramw} that our algorithm chooses a mixture of kernels of width around $0.453$ and kernels of width $1$. 
\begin{figure}[tb!]
  \includegraphics[width=0.44\textwidth]{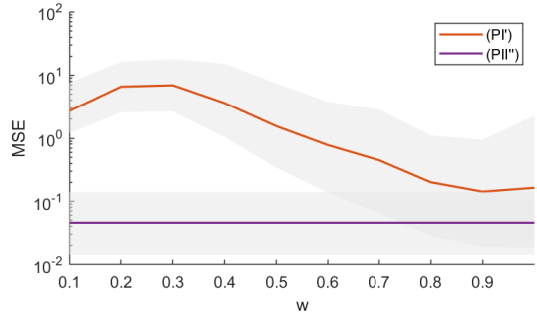}
  \caption{MSE obtained by \eqref{eqn_w_only} and \eqref{eqn_finite_prob} over 1000 repetitions of random sampling of the signal in \eqref{eq:mixedGauss}. \eqref{eqn_finite_prob} is solved over different values of $w$ over a grid on the interval $[0.1,1]$. \eqref{eqn_w_only} finds the width as part of the algorithm and is presented for comparison with \eqref{eqn_finite_prob}. The standard deviation around each mean is plotted in gray for both \eqref{eqn_w_only} and \eqref{eqn_finite_prob}. The figure shows that the selection of the width within the algorithm gives the advantage of a lower mean generalization error.}
  \label{fig:GridSearch}
\end{figure}

\begin{figure}[tb!]
  \includegraphics[width=0.44\textwidth]{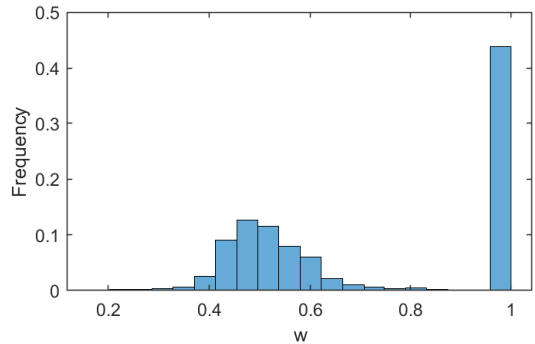}
  \caption{Histogram of the widths found using \eqref{eqn_w_only} over 1000 repetitions of random sampling of the signal in \eqref{eq:mixedGauss}. On average, $14$ kernels were selected for the representation of the function out of which an average of $6$ kernels have a width of $1$.}
  \label{fig:Histogramw}
\end{figure}

Smoother kernels perform better because of the random sampling combined with the restriction of only using kernels centered at the sample points. 
Therefore, we investigate the effect of solving problem \eqref{eqn_the_problem} which finds both kernel centers and kernel widths.
Problem \eqref{eqn_the_problem} is solved using $\gamma=1000$, $\eta_\lambda = 0.01$, $\eta_\mu = 1$ and $T=1000$ over 1000 randomly sampled training sets, and results in an MSE of $0.0588$. Although the MSE of \eqref{eqn_the_problem} is similar to that of \eqref{eqn_w_only}, it is important to note that by placing kernels arbitrarily we are able to better estimate the width of the kernel: by comparing Figure \ref{fig:HistogramMoving} to Figure \ref{fig:Histogramw} it can be seen that \eqref{eqn_the_problem} uses only 1 to 2 kernels per representation of width 1 whereas \eqref{eqn_w_only} uses on average 6 kernels of width 1. Moreover, we consistently obtain representations of lower complexity when solving \eqref{eqn_the_problem}---see Figure \ref{fig:NumKernel}.

\begin{figure}[tb!]
  \includegraphics[width=0.44\textwidth]{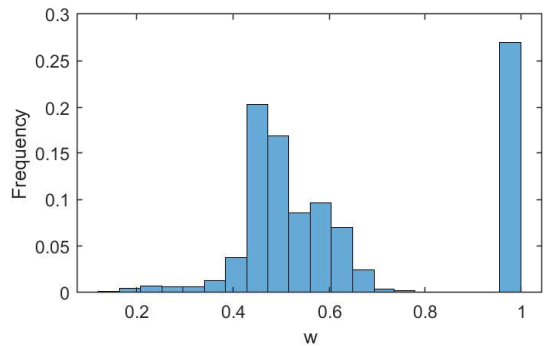}
  \caption{Histogram of the widths found using \eqref{eqn_the_problem} over 1000 repetitions of random sampling of the signal in \eqref{eq:mixedGauss}. On average a representation had 6 kernels out of which between 1 and 2 kernels had a width of $w=1$ and 4 kernels had a width in the interval $[0.384,0.648]$.}
  \label{fig:HistogramMoving}
\end{figure}
\begin{figure}[tb!]
  \includegraphics[width=0.44\textwidth]{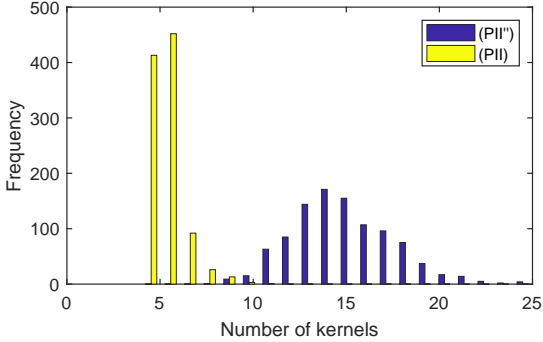}
  \caption{Histogram of the number of kernels in the representation of the estimated functions by solving problems \eqref{eqn_w_only} and \eqref{eqn_the_problem}. \eqref{eqn_the_problem} achieves a lower complexity representation by moving the centers in addition to the widths.}
  \label{fig:NumKernel}
\end{figure}

\subsection{Examining the Complexity of the Solution}\label{complexity}

So far we have shown that the complexity of the formulation can be reduced by moving centers in addition to moving the width. To further explore the effect of kernel centers on the complexity of the solution, we compare the performance of \eqref{centersearch} to that of kernel orthogonal matching pursuit (KOMP) with pre-fitting (see \cite{vincent2002kernel, koppel2017parsimonious}), for a simulated signal as in \eqref{eq:mixedGauss}. KOMP takes an initial function and a set of sample points and tries to estimate it by a parsimonious function of a lower complexity. As a backwards feature selection metheod, the algorithm starts by including all samples and then reduces the complexity of the function by reducing one feature at a time. The KOMP algorithm in \cite{vincent2002kernel, koppel2017parsimonious} was modified by changing the stopping criteria to be the estimation error, rather than the distance to the original function. This stopping criteria allows us to compare the sparsity needed to obtain similar estimation error.

The signal was sampled from the function in \eqref{eq:mixedGauss}~using $w_0=0.5$ and $m = [ 5, 10, 20]$ by generating $N = [2m, 4m, 6m]$ samples for each function, thus creating $9$ different sample size and signal pairs. The problem in \eqref{centersearch} was solved using $\gamma=30$, $\eta_\lambda = 0.05$, $\eta_\mu = 0.1$ and $T=1000$. Subsequently, a least squares algorithm was trained using kernels at the location found by our algorithm. Both our method and KOMP used $w = 0.5$ as the kernel hyper-parameter.

The number of kernels needed to obtain the same MSE is compared over $1000$ realizations of each signal between \eqref{centersearch} and KOMP. When the number of samples is at least $30$, our method is able to find a sparser representation  $100\%$ of the times. In the cases with fewer samples the problem is likely undersampled, such that the estimation of the function is more difficult. Figure \ref{KompNum} shows two cases in which $20$ samples are simulated, where $m=5$ and $m=10$. In both cases, our method finds sparser representations in $99\%$ of the realizations. 
When $10$ kernels and $5$ kernels are superimposed,  our method finds a representation which is less sparse in only $0.4\%$  and $0.3\%$ of realization respectively. Lastly, when the signal is a weighted sum of 5 functions and only 10 samples are generated, our method cannot find a sparser solution for $3.7\%$ of the realizations. 
\begin{figure*}[tb!]
\begin{subfigure}{.45\textwidth}
  \centering
  \includegraphics[width=.9\linewidth]{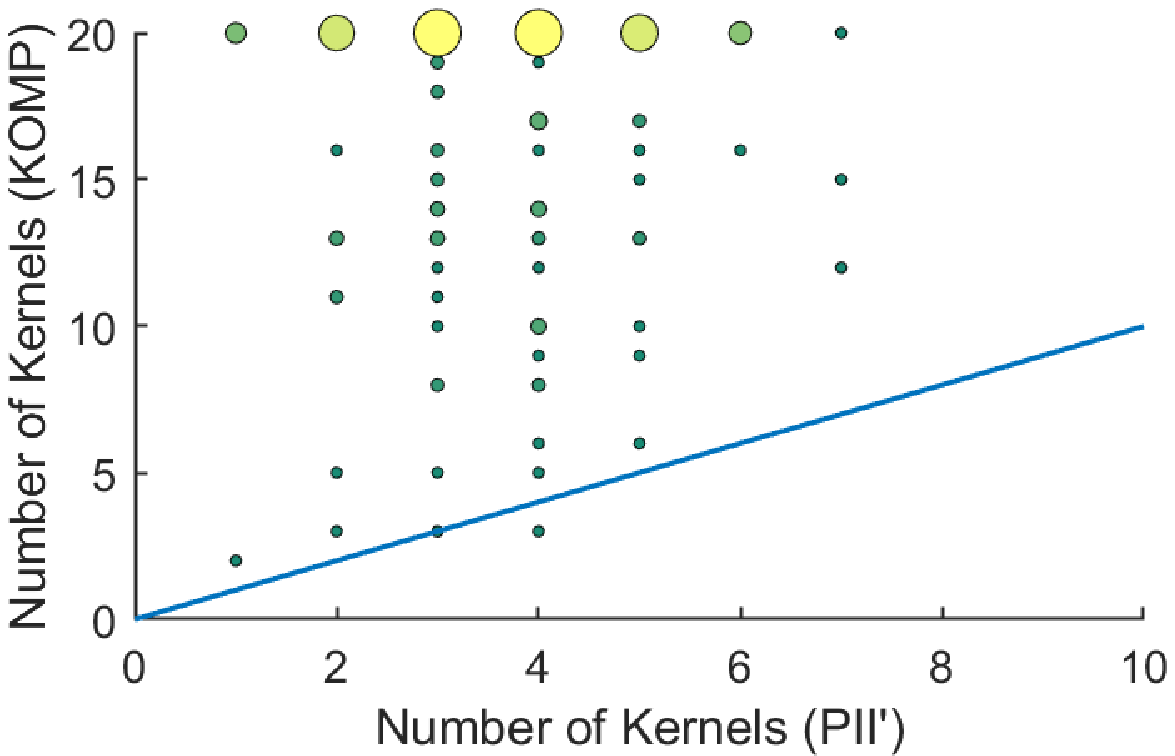}
  \caption{}
  \label{fig:sfig1}
\end{subfigure}%
\begin{subfigure}{.45\textwidth}
  \centering
  \includegraphics[width=.9\linewidth]{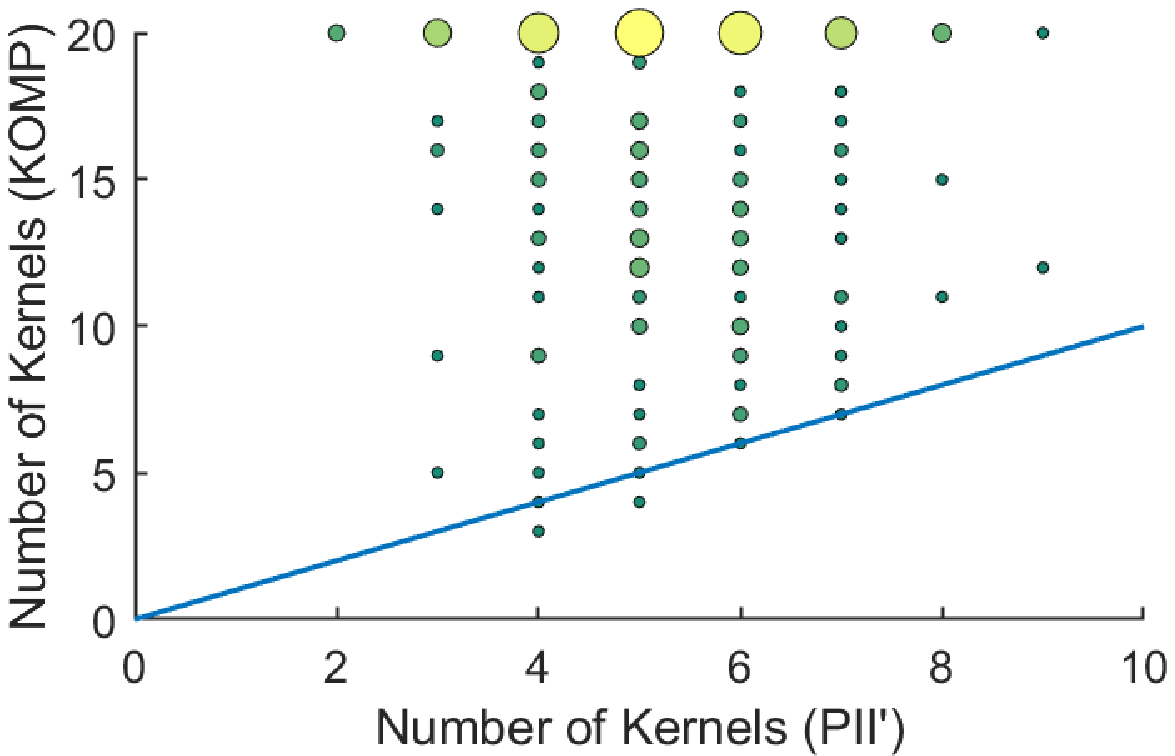}
  \caption{}
  \label{fig:sfig2}
\end{subfigure}
\caption{Comparison of the complexity of the representation of \eqref{centersearch} and KOMP for a similar MSE over 1000 realizations. In Figure (a) 5 Gaussian functions were used to simulate the signal. In Figure (b) 10 Gaussian functions were used to simulate the signal. In both cases \eqref{centersearch} achieves a lower complexity for $99\%$ of the realizations.
}\label{KompNum}
\end{figure*}

The generalization MSE was compared between the two methods for different levels of sparsity. Figure \ref{fig:FixedMSE} shows the changes in generalization MSE  as the number of kernels used in the representation increases. $1000$ realizations of a signal with m = $10$ and a training set of size N = $100$ were used. The ability of our method to place kernels at any location, beyond the training set, allows it to achieve significantly lower errors compared to KOMP at any sparsity level. As the number of kernels used increases, the difference in performance between the two methods decreases. At approximately 25 kernels the performance of our method plateaus. Comparatively, KOMP achieves a plateau when the representation holds 50 kernels. 
\begin{figure}[tb!]
\centering
  \includegraphics[width=.9\linewidth]{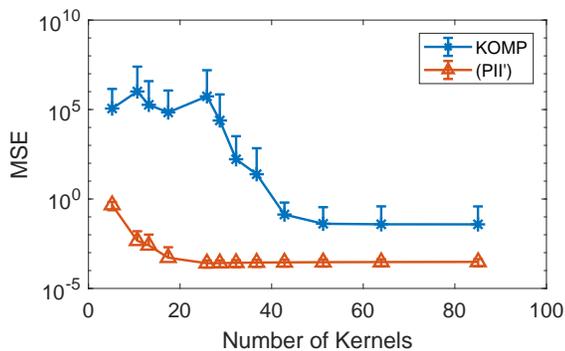}
  \caption{Generalization MSE as a function of number of kernels for KOMP and \eqref{centersearch} over 1000 realizations of the signal in \eqref{eq:mixedGauss}.}\label{fig:FixedMSE}
\end{figure}

\subsection{Varying Degrees of Smoothness} \label{var_smooth}
In the previous sections we have only considered signals from functions belonging to an RKHS in the family of RKHSs with Gaussian kernels. In this section we explore the effect of sample size on the complexity of the representation and the MSE on a signal of varying degrees of smoothness.
To this end, a signal of varying smoothness is simulated using the following equation:
\begin{equation}\label{sintsquared}
y_i = \sin(0.5 \pi x_i^2) + \xi_i
\end{equation}
where $\xi_i \in \mathcal{N}(0, 10^{-3})$ represents the noise. 

The solution of problem \eqref{eqn_the_problem} was compared to destructive KOMP, with the stopping criteria set to be the desired number of kernels rather than the distance from the original function. This stopping criteria allows us to have a fair comparison between our method and KOMP by using equally sparse functions. The problem in \eqref{eqn_the_problem} was solved using $\gamma=2$, $\eta_\lambda = 0.001$, $\eta_\mu = 30$ and $T=1000$.
Sample sizes of 51, 101, 201, 301, 401, and 501 were created by uniformly sampling in the interval $[-5,5]$. Test sets of 1000 samples randomly selected on the interval $[-5,5]$ were created. Using the method of selecting kernel centers and widths by selecting the peaks of the function $\alpha(\bbz,w)$, our method finds a representation with 26 kernels regardless of the sample size. 
It can be seen in Figure \ref{fig:MSEvSample} that in addition to the number of kernels being consistent across all sample sizes the MSE is also consistent for our method. The MSE of the estimation using KOMP, however, increases as the sample size grows. 
\begin{figure}[tb!]
\centering
 \includegraphics[width=0.45\textwidth]{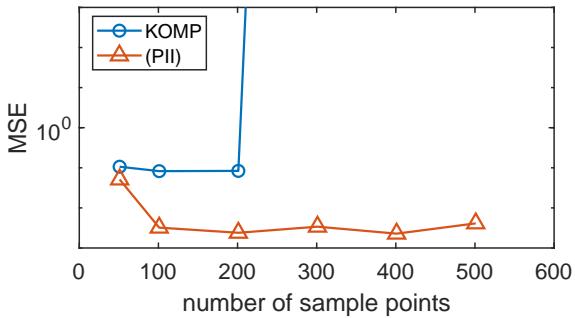}
 \caption{MSE for varying sample sizes using \eqref{eqn_the_problem} and  KOMP with 26 kernels over 100 realizations of the signal in \eqref{sintsquared}.}
 \label{fig:MSEvSample}
\end{figure}
The problem of reducing features is a combinatorial problem which grows exponentially with the sample size. The backwards approach used by KOMP is a greedy approach which removes only one kernel at a time. As the sample size increases, there are more misleading paths of removal it can take. Additionally, it is only using kernels placed at the sample points, which means it will need more kernels when the true kernel is centered between two sample points.

\subsection{User Localization Problem}\label{Localization}
In the remainder of this section we will apply our method to real world application for which the class of functions the signal belongs to is unknown.
We consider the problem of using RF signals to identify the location of a receiver. Specifically, given the Wi-Fi signal strength from seven routers we wish to identify the room in which our receiver is located \cite{narayanan2016user}. The signal strength varies depending on the location of the router.
 The signal was received from 7 routers spread throughout an office building.  The data was collected using an Android device. At each location the signal strength from each router was observed at $1 s$ intervals. The data was then categorized into 4 groups, each representing the room in which the signal strength was observed. All the rooms are on the same floor, with the rooms representing  the conference room, the kitchen, the indoor sports room, and work
areas \cite{narayanan2016user}. The goal is to be able to accurately detect the location of the android device given the measured signal strength. 

We use 10-fold cross-validation in order to estimate the generalization error of our algorithm as was used in \cite{narayanan2016user}. The dataset was split into 10 sets of equal size with equal distribution of each label. At each turn one of the sets was used for testing while the others were concatenated and used to train the algorithm. This multiclass classification problem was solved using the one-vs-one strategy, which required 6 comparisons. The final class assignment is made through voting. Each comparison makes a prediction on the class of a sample and thus casts a vote for a particular class. The class with the majority of votes is assigned to the sample. The cost function was for this classification problem is

\begin{equation}\label{eqn_class_cost}
c( \bbz, \bby) = \sum_i \max \lbrace 0,1 - y_i \hat{y}_i \rbrace -\bm{\epsilon}. \
\end{equation}
Solving problem \eqref{eqn_the_problem} we obtain an average accuracy of $98\%$, similar to the performance observed in \cite{narayanan2016user}, in which a fuzzy decision tree algorithm with 50 rules was used to obtain an accuracy of $96.65\%$. This result has been observed to be consistent over increasing values of the sparsity parameter $\gamma$.

\subsection{Mnist Digits Classification}\label{MNIST}
We use data of handwritten digits from the MNIST data set \cite{lecun1998mnist}, which consists of a training set of $60,000$ sample-label pairs and a testing set of $10,000$ images and labels. Each sample is a 28-pixel by 28-pixel grayscale image, which was vectorized to form 784 dimensional features. The labels are between 0 and 9 and correspond to the digit written. There are a total of 10 classes.

The number of features is too large to estimate the value of $\alpha(\bbz,w)$ at every $\bbz \in \reals^{784}$. In order to find a set $\ccalX$ over which $\alpha(\bbz,w)$ is defined, we use k-means with $400$ clusters for each digit. Then $\ccalX'$ in \eqref{eqn_w_only} is defined as the set of all cluster centers and the cost function in \eqref{eqn_class_cost} is used. We then run our algorithm using a one-to-one strategy for multi-class classification and achieve an accuracy of $98.12\%$ for an average of 788 features per classification which is comparable to the accuracy found using \eqref{eqn_finite_prob} using the training set as kernel centers and \eqref{eqn_finite_prob} using the centers found through k-means. The complexity of the representation can be further be reduced, however it comes at the cost of the classification accuracy.
\begin{table}[]
\caption{Classification results for \eqref{eqn_finite_prob} using the training samples as kernel centers and using centers selected from k-means and \eqref{eqn_w_only} using the centers selected from k-means}
\begin{tabular}{lll}
 Method & Number of Kernels per Classifier  & Accuracy  \\ \hline
 \eqref{eqn_finite_prob} & 12000 & 98.83 $\%$  \\
\eqref{eqn_finite_prob} with k-means  & 800 & 98.16 $\%$   \\
 \eqref{eqn_w_only} & 788 & 98.12 $\%$ \\ 
 \eqref{eqn_w_only} & 731 & 96.71 $\%$ \\ 
 \eqref{eqn_w_only} & 53 & 85.66 $\%$ \\ 
 \hline
\end{tabular}
\end{table}

Although the dimensionality of the features in the original data makes the use of \eqref{eqn_the_problem} impractical, we can solve that problem, by projecting the data into a lower dimensional space by using principal component analysis (PCA). The formulation in \eqref{eqn_the_problem} has the advantage that the found kernel centers can give some intuition about the distribution of the signal. Particularly, in the case of digits they can describe digits which are representative of written digits. To illustrate that we have performed the classification of the digits '0' and '1' using the first 3 principal components. The low dimensional feature set allows us to find the $\bbx$ which result in the highest value for $\alpha(w,\bbx)$. From these points we can reconstruct the corresponding digits. Figure \ref{fig:typical} shows the resulting images. These are not part of the initial written digit data set but rather represent an image that is closest to all written digit. The accuracy of the classification is $99.62\%$.

\begin{figure}[tb!]
\begin{minipage}[c]{0.45\columnwidth}
  \centering
  \includegraphics[width=\columnwidth]{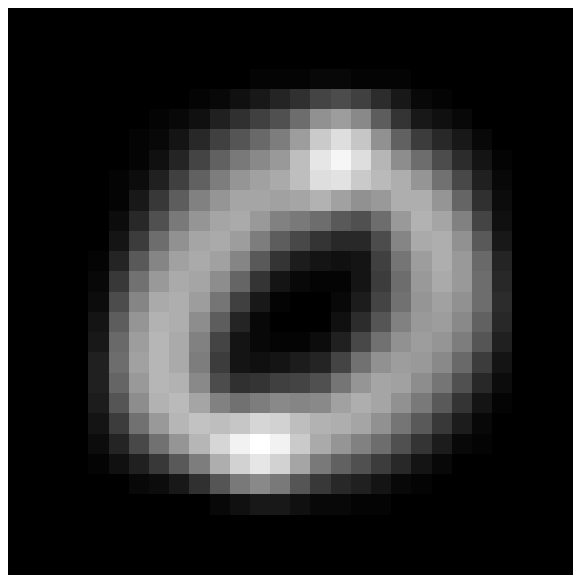}
  {\small (a)}
\end{minipage}
\begin{minipage}[c]{.45\columnwidth}
  \centering
  \includegraphics[width=\columnwidth]{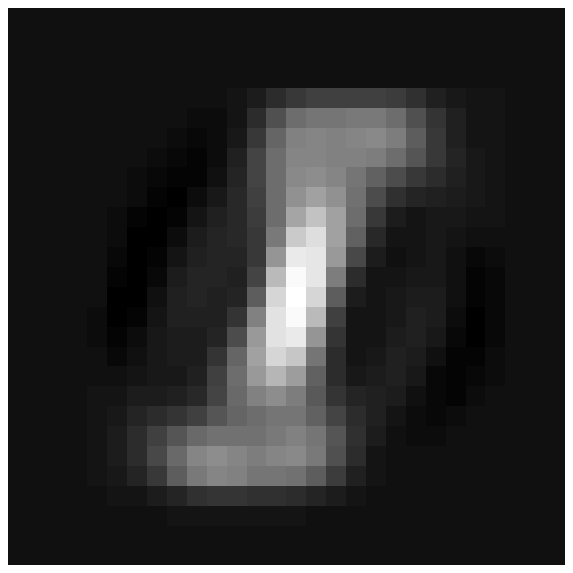}
  {\small (b)}
\end{minipage}
\caption{Kernel centers obtained by solving \eqref{eqn_the_problem} with the highest value for $\alpha(\bbz,w)$ for each digit. These centers are representative of the digits, however, are distinct from any of the samples in the training set. }
\label{fig:typical}
\end{figure}

%%%%%%%%%%%%%%%%%%%%%%%%%%%%%%%
%%% SECTION : Conclusions   %%%
%%%%%%%%%%%%%%%%%%%%%%%%%%%%%%%

\section{Conclusions} \label{sec_conclusions}
In this paper, we have introduced a method for function estimation in RKHS which can model signals of varying degrees of smoothness. The algorithm finds a sparse representation of a  function in the sum space of a family of RKHSs, and determines the kernel parameter for each kernel. Additionally, due to the sparsity, traditional representer theorems no longer hold, so our algorithm placed kernels arbitrarily. While the problem was not convex, a change in the representation of the function to the integral over the product of the coefficient function and the kernel function, allowed us to solve the problem in the dual domain. By leveraging the results on strong duality, we were able to solve the problem in the dual domain.

The theoretical results were validated though simulated signals. We showed that our algorithm finds kernel centers and widths which can represent the function. Furthermore, we showed that our algorithm is able to find sparser representations than KOMP, with the same error. The sparseness of the representation was shown to be independent of sample size, which is not true for greedy kernel reduction methods. 
We also validated our method on a localization dataset, for which we were able to reduce the complexity by $86\%$ while maintaining the high accuracy. Similarly, a sparse kernel representation was obtained for classifying the digits from the MNIST dataset.

\appendices

\section{Proof of Theorem~\ref{thrm_zerodual}}
\label{A:zerodual}

\begin{proof}
In order to show strong duality, it is sufficient to show that the perturbed function $P(\xi)$ in  (\ref{perepi}) is convex \cite{rockafellar2015convex, shapiro2000duality}. Consider the perturbed version of the optimization function

\begin{equation}\label{perepi}
\begin{aligned}[l]
P(\xi) = \underset{\alpha, \bm{\theta}}{\text{min}} & \ f_0(\alpha) \\
 s.t. & \ c(z_i,y_i) \leq \xi_i\\
 &\  z_i =  \int \alpha(\bbz, w) \cdot k(\bbx_i, \bbz; w) \, dw \, d\bbz. \\
\end{aligned}
\end{equation}

In equation (\ref{perepi}) $f_0(\alpha)$ represents the objective function of our original problem: $f_0(\alpha) = \gamma \int \mathbb{I}(\alpha(\bbz,w) \neq 0)  + 0.5\alpha^2(w, \bbx)\, dw \, d\bbz$. The optimal solution for $P(0)$ is the solution to the primal problem.

Convexity of the perturbed problem can be shown, by proving that
given an arbitrary pair of perturbations $\xi_1$ and $\xi_2$ and the corresponding optimal values $P(\xi_1)$ and $P(\xi_2)$,
for any $\beta \in [0,1]$ the solution $P(\xi_\beta)$ has the following property, where $\xi_\beta$ is defined by $\xi_\beta = \beta \xi_1 + (1-\beta) \xi_2 $:

\begin{equation}\label{convex}
P(\xi_\beta) \leq \beta P(\xi_1) + (1-\beta) P(\xi_2).
\end{equation}

In order to prove the convexity of the perturbed problem we need to introduce the following lemma.

\begin{lemma} \label{convexset}
The set of constraints given by 
\begin{equation}
\begin{aligned}
\mathcal{B} = \lbrace b, &  \, \, b = f_0(\alpha), \, \,  c(z_i, y_i)<\xi_i,  \\
& z_i = \int \alpha(\bbz,w) \cdot k(\bbx_i, \bbz; w) \, dw \, d\bbz, ~ ~ i = 1 \cdots N.\rbrace \\
\end{aligned}
\end{equation}
is convex.
\end{lemma}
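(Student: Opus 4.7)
The plan is to verify convexity of $\mathcal{B}$ directly: for any two points in $\mathcal{B}$ witnessed by functions $\alpha_1, \alpha_2 \in L_2(\ccalX \times \ccalW)$ with tolerances $\xi^{(1)}, \xi^{(2)}$, and for any $\beta \in [0,1]$, I will construct an explicit $\alpha_\beta \in L_2$ realizing the convex combination. The obvious candidate $\beta \alpha_1 + (1-\beta)\alpha_2$ fails because of the $L_0$ term in $f_0$: if the supports of $\alpha_1$ and $\alpha_2$ are disjoint, then $\|\beta \alpha_1 + (1-\beta)\alpha_2\|_{L_0}$ equals $\|\alpha_1\|_{L_0} + \|\alpha_2\|_{L_0}$, not the required weighted average. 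The crucial idea is instead to mix $\alpha_1$ and $\alpha_2$ \emph{spatially}: for a measurable set $A \subseteq \ccalX \times \ccalW$ to be chosen, define
\begin{equation}
   \alpha_\beta(\bbz,w) = \alpha_1(\bbz,w)\,\indicator[(\bbz,w) \in A]
   + \alpha_2(\bbz,w)\,\indicator[(\bbz,w) \notin A].
\end{equation}
With this construction, the objective splits as $\|\alpha_\beta\|_{L_2}^2 = \int_A \alpha_1^2 + \int_{A^c}\alpha_2^2$ and $\|\alpha_\beta\|_{L_0} = \int_A \indicator[\alpha_1 \neq 0] + \int_{A^c} \indicator[\alpha_2 \neq 0]$, while the constraint witnesses become $z_i(\alpha_\beta) = \int_A \alpha_1 k_i + \int_{A^c} \alpha_2 k_i$ with $k_i(\bbz,w) := k(\bbx_i,\bbz;w)$. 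The task reduces to finding a single $A$ that partitions the mass of each of these functionals in the ratio $\beta : (1-\beta)$.

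The core tool I would invoke is Lyapunov's convexity theorem. Consider the $\reals^{2N+4}$-valued vector measure $\mathbf{m}$ whose components on a measurable set $E \subseteq \ccalX \times \ccalW$ are
\begin{equation}
   \int_E \alpha_1^2,\quad \int_E \alpha_2^2,\quad \int_E \indicator[\alpha_j \neq 0],\quad \int_E \alpha_j k_i,
\end{equation}
for $j \in \{1,2\}$ and $i = 1,\dots,N$. Each component is an absolutely continuous signed measure with respect to Lebesgue on $\ccalX \times \ccalW$; the assumption that $k(\cdot,\bbz;w)$ carries no point masses ensures that the kernel-weighted components remain atomless, so $\mathbf{m}$ is non-atomic. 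Lyapunov's theorem then certifies that the range $\{\mathbf{m}(E) : E \text{ measurable}\}$ is a convex subset of $\reals^{2N+4}$. Since $\mathbf{m}(\emptyset) = \mathbf{0}$ and $\mathbf{m}(\ccalX \times \ccalW)$ both lie in this range, so does $\beta\,\mathbf{m}(\ccalX \times \ccalW)$, yielding the desired set $A$.

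With such an $A$ fixed, every splitting works simultaneously: $\|\alpha_\beta\|_{L_2}^2 = \beta \|\alpha_1\|_{L_2}^2 + (1-\beta)\|\alpha_2\|_{L_2}^2$, $\|\alpha_\beta\|_{L_0} = \beta\|\alpha_1\|_{L_0} + (1-\beta)\|\alpha_2\|_{L_0}$, and $z_i(\alpha_\beta) = \beta z_i(\alpha_1) + (1-\beta) z_i(\alpha_2)$. Consequently $f_0(\alpha_\beta) = \beta f_0(\alpha_1) + (1-\beta)f_0(\alpha_2)$, and convexity of $c$ in its first argument gives $c(z_i(\alpha_\beta), y_i) \leq \beta\,c(z_i(\alpha_1),y_i) + (1-\beta)\,c(z_i(\alpha_2),y_i) \leq \beta \xi_i^{(1)} + (1-\beta)\xi_i^{(2)}$, so the convex combination indeed belongs to $\mathcal{B}$. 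The main obstacle I anticipate is the careful verification of Lyapunov's hypotheses, namely the integrability of the kernel-weighted components (which follows from $\alpha_j \in L_2$ together with $k(\cdot,\bbz;w) \in L_2$ via Cauchy--Schwarz) and non-atomicity (which is precisely where the ``no point masses'' assumption from Theorem~\ref{thrm_zerodual} is used). Everything else is routine bookkeeping.
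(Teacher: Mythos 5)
Your proposal is correct and follows essentially the same route as the paper's own proof: both construct a finite-dimensional non-atomic vector measure whose components track the fit functionals $\int \alpha_j k_i$ together with the $L_2$ and $L_0$ contributions of $\alpha_1$ and $\alpha_2$, invoke Lyapunov's convexity theorem to obtain a set splitting all of these simultaneously in the ratio $\beta:(1-\beta)$, and then define $\alpha_\beta$ by patching $\alpha_1$ and $\alpha_2$ over that set and its complement. The only cosmetic difference is that you keep the $L_2$ and $L_0$ components as separate coordinates while the paper bundles them into the $f_0$-integrand, which changes nothing of substance.
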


\begin{proof} Given an arbitrary pair $b_1,b_2 \in \mathcal{B}$, there exists a corresponding $\alpha_1, \alpha_2 \in L_2$ such that $b_1 = f_0(\alpha_1)$ and $b_2 = f_0(\alpha_2)$. In order to prove the convexity of the set, we will show that there exists a feasible $\alpha_\beta \in L_2$ such that for any $\beta \in [0,1]$

\begin{equation}
f_0(\alpha_\beta) = \beta \, b_1 + (1-\beta) b_2
\end{equation}

Let $\mathbb{B}$ be the Borel field of all possible subsets of $\mathcal{U}$, where $\mathcal{U} = \lbrace\mathcal{X} \times \mathcal{W} \rbrace$ is the set of all possible kernel centers and kernel widths and. Let us construct a measure  over $\mathbb{B}$, where $\mathcal{V} \subset \mathbb{B}$. 

\begin{equation}\label{measure}
\bm{m}(\mathcal{V}) = \left[
\begin{matrix}
\int_\mathcal{V} \alpha_1(\bbv) \bbk(\bbv) \, d\bbv \\
 \int_\mathcal{V} \alpha_2(\bbv) \bbk(\bbv) \, d\bbv \\
 \int_\mathcal{V}  \gamma \mathbb{I}(\alpha_1(\bbv) \neq 0) + \alpha_1^2(\bbv) \, d\bbv  \\
   \int_\mathcal{V} \gamma \mathbb{I}(\alpha_2(\bbv) \neq 0) + \alpha_2^2(\bbv) \, d\bbv \\  
\end{matrix} \right]
\end{equation}

The first $2N$ elements of the measure represent the estimated function of the signal $\bby$ using $\alpha_1$ and $\alpha_2$ and a subset of the kernels, where $\bbk(\bbv)_i = k(\bbX,\bbz_v; w_v)$ and $\bbv = [\bbz_v^T,w_v]^T$. 
The last two elements of $\bm{m}(\mathcal{V})$ measure the sparsity of functions $\alpha_1$ and $\alpha_2$ over the set $\mathcal{V}$ respectively. 
Two sets are of interest, the empty set and $\mathcal{U}$. The measure  of the former is $\bm{m}(\emptyset) = 0$ and the measure for $\mathcal{U}$ can be inferred from our optimization problem.

\begin{equation}
\bm{m}(\mathcal{U}) = \left[
\begin{matrix}
 \int_\mathcal{U} \alpha_1(\bbv) \bbk(\bbv) \, d\bbv \\
 \int_\mathcal{U} \alpha_2(\bbv) \bbk(\bbv) \, d\bbv \\
\int_\mathcal{U} \gamma \mathbb{I}(\alpha_1(\bbv) \neq 0) + \frac{1}{2}\alpha_1^2(\bbv) \, d\bbv  \\
\int_\mathcal{U} \gamma \mathbb{I}(\alpha_2(\bbv) \neq 0) + \frac{1}{2} \alpha_2^2(\bbv) \, d\bbv \\  
\end{matrix} \right] = 
\left[
\begin{matrix}
 \hat{\bby}_{1} \\
 \hat{\bby}_{2}  \\
 b_1  \\
  b_2 \\  
\end{matrix} \right] 
\end{equation}

 Lyapunov's convexity theorem \cite{liapounoff1940fonctions} states that a non-atomic measure vector on a Borel field is convex. Note that the representation in \eqref{eqn_integral_rep} allows us to construct the measure with non-atomic masses and is essential to the proof of strong duality. Since $\alpha$ does not contain any point masses our measure $\bm{m}$ is convex . Therefore, for any $\beta \in [0,1]$, there exists a set $\mathcal{V}_\beta \subset \mathbb{B}$ such that:
 
 \begin{equation}\label{Zbeta}
 \bm{m}(\mathcal{V}_\beta) = \beta \, \bm{m}(\mathcal{U}) +(1-\beta) \bm{m}(\emptyset) = \beta \, \bm{m}(\mathcal{U})
 \end{equation}

The measure of the complement of the set $\mathcal{V}_\beta$, as defined by $\mathcal{V}_\beta^c \cup \mathcal{V}_\beta= \mathcal{U} $ and $\mathcal{V}_\beta^c \cap \mathcal{V}_\beta= \emptyset $, can be computed, due to the additivity property of measures:

\begin{equation}\label{complement}
\bm{m}(\mathcal{V}_\beta^c) = \bm{m}(\mathcal{U}) - \bm{m}(\mathcal{V}_\beta) = (1-\beta) \bm{m}(\mathcal{U}) = \bm{m}(\mathcal{V}_{(1-\beta)}).
\end{equation}

We can define the function $\alpha_\beta$ from (\ref{Zbeta}) and (\ref{complement}):

\begin{equation}\label{alphabeta}
\alpha_\beta (\bbv) = \begin{cases}
       \alpha_1(\bbv) &  \bbv \in \mathcal{V}_\beta \\
       \alpha_2(\bbv) & \bbv \in \mathcal{V}_\beta^c \\
     \end{cases}
\end{equation} 

From this construction of $\alpha_\beta (\bbv)$ it can be easily seen that $f_0(\alpha_\beta) = \beta f_0(\alpha_1) + (1-\beta) f_0(\alpha_2)$. Next we will show that $\alpha_\beta$ is feasible. Define $\hat{\bby}_{\beta}$ as:

\begin{equation}
\begin{aligned}
\hat{\bby}_{\beta} & =\int_\mathcal{U} \alpha_\beta(\bbv) \bbk(\bbv) \, d\bbv = \\
& =\int_{\mathcal{V}_\beta} \alpha_1(\bbv) \bbk(\bbv) \, d\bbv  + \int_{\mathcal{V}_\beta^c} \alpha_2(\bbv) \bbk(\bbv) \, d\bbv = \\
& = \beta \hat{\bby}_{1} + (1- \beta) \hat{\bby}_{2} \\
\end{aligned}
\end{equation}

Since $c(z_i,y_i)$ is convex it follows that for any $i \in [1,N]$, $c(\beta z_{i,1} + (1-\beta) z_{i,2}, y_i) \leq \beta c(z_{i,1},y_i) + (1-\beta) c(z_{i,2}, y_i)$. We can use this property to show that $c(z_{i,\beta}, y_i) \leq \xi_\beta$.

\begin{equation}
\begin{aligned}
c(z_{i,\beta}, y_i) \leq \beta c(z_{i,1},y_i) + (1-\beta) c(z_{i,2}, y_i) \leq \\
\leq  \beta \xi_1 + (1-\beta) \xi_2 = \xi_\beta
\end{aligned}
\end{equation}

Thus it was proven that $\alpha_\beta$ is also feasible and therefore the set of constraints is convex.
\end{proof}

Let $(\alpha_1, \bbz_1, \xi_1)$  and $(\alpha_2, \bbz_2, \xi_2)$ be the pair of optimal solutions to the two perturbed problems $P(\xi_1)$ and $P(\xi_2)$. We have shown that there exists a feasible point $\alpha_\beta$ for the problem perturbed by $\xi_\beta = \beta \xi + (1-\beta) \xi'$, which satisfies $f_0(\alpha_\beta) =  P(\xi_1) + (1-\beta) P(\xi_2)$. Given that it is a feasible point the objective function is greater or equal to the solution of the problem

\begin{equation}
\beta P(\xi_1) + (1-\beta) P(\xi_2) = f_0(\alpha_\beta) \geq P(\beta \xi_1+ (1-\beta) \xi_2).
\end{equation}

Since the perturbed problem is convex, the original problem has zero duality gap.

\end{proof}

\section{Proof of Corollary~\ref{T:primal_recovery}}
\label{A:primal_recovery}

\begin{proof}

Theorem~\ref{thrm_zerodual} implies that any solution~$(\alpha^\star, \boldsymbol{\yhat}^\star)$ of~\eqref{eqn_the_problem} is such that~$(\alpha^\star, \boldsymbol{\yhat}^\star) \in \argmin_{\alpha,\,\yhat_i} \ccalL(\alpha,\boldsymbol{\yhat},\bblambda^\star,\bbmu^\star)$~\cite{Boyd04c}. Since~$\ccalL$ in~\eqref{eqn_lagrangian} separates across~$\alpha$ and~$\boldsymbol{\yhat}$, we can consider the minimizations individually to obtain
\begin{equation}\label{eqn_minimax}
	\alpha^\star \in \argmin_{\alpha \in L_2} \ccalL_\alpha(\alpha,\bblambda^\star)
		\text{,}
\end{equation}
for~$\ccalL_\alpha$ from~\eqref{eqn_g_alpha}. We also know from Proposition~\ref{prop_Lagrangian_solution} that~$\alpha_d^\star$ is in the $\argmin$~set of~\eqref{eqn_minimax}. In the sequel, we show that it is~(essentially) its only element.

To do so, we construct~$\alpha^\star$ piece by piece by partitioning the integral in~$\ccalL_\alpha$ into three disjoint sets depending on the value of~$\alpha_d^\star$. On~$\ccalA_{>} = \{(\bbz,w) \in \ccalX \times \ccalW \mid \vert \alpha_d^\star(\bbz,w) \vert > \sqrt{2\gamma}\}$, we know that~$\alpha_d^\star$ takes values from~\eqref{eqn_bar_alpha_star}, the unique minimizer of~$\ccalL_\alpha$ since it stems from the minimization of the strongly convex function~\eqref{eqn_Fprime}. Moreover, our assumptions on the reproducing kernel together with~\eqref{eqn_bar_alpha_star} imply that~$\alpha_d^\star \in L_2$ when restricted to~$\ccalA_{>}$. Hence,
\begin{equation}
	\alpha^\star(\bbz,w) = \alpha_d^\star(\bbz,w)
		\text{,} \quad \text{for } (\bbz,w) \in \ccalA_{>}
		\text{.}
\end{equation}

Over the set~$\ccalA_{<} = \{(\bbz,w) \in \ccalX \times \ccalW \mid \vert \alpha_d^\star(\bbz,w) \vert < \sqrt{2\gamma}\}$, notice from~\eqref{eqn_g_alpha} that the integrand of~$\ccalL_\alpha$ is always non-negative. What is more, it is always strictly positive unless~$\alpha \equiv 0$. This is ready by applying Lemma~\eqref{T:minInt} and the fact that the minimum of~\eqref{eqn_Fprime} is positive. Thus, from the monotonicity of the integral operator, $\alpha^\star$ is again unique and equal to zero on~$\ccalA_{<}$. From~\eqref{alphaopt}, so is~$\alpha_d^\star$ and we obtain
\begin{equation}
	\alpha^\star(\bbz,w) = \alpha_d^\star(\bbz,w) = 0
		\text{,} \quad \text{for } (\bbz,w) \in \ccalA_{<}
		\text{.}
\end{equation}
Immediately, we have that~$\alpha^\star \in L_2$ over~$\ccalA_{>} \cup \ccalA_{<}$.

To conclude the proof, observe that~$\mathfrak{m}\left[ \ccalA_{>} \cup \ccalA_{<} \right] = \mathfrak{m}\left[ \ccalX \times \ccalW \right]$, where~$\mathfrak{m}$ denotes the Lebesgue measure. Indeed, the complement of~$\ccalA_{>} \cup \ccalA_{<}$ is the set~$\ccalA_{=} = \{(\bbz,w) \in \ccalX \times \ccalW \mid \vert \alpha_d^\star(\bbz,w) \vert = \sqrt{2\gamma}\}$. From our assumption on the reproducing kernel, $\ccalA_{=}$ is the set of zeros of a real analytic function, which are isolated and therefore countable~\cite{Krantz02p}. In other words, $\alpha^\star$ and~$\alpha_d^\star$ are in the same equivalence class in~$L_2$ since they are equal except perhaps on a set of measure zero.
\end{proof}

\bibliographystyle{IEEEtran}
\bibliography{myIEEEabrv,bib-mkl,mkl,gsp,math,kernel,sp}

\end{document}